\newcommand{\qedsymbol}{$\blacksquare$}
\newcommand*{\inlineequation}[2][]{%
	\begingroup
	\refstepcounter{equation}%
	\ifx\\#1\\%
	\else
	\label{#1}%
	\fi
	\binoppenalty=10000 %
	\ensuremath{%
		#2%
	}%
	~\@eqnnum
	\endgroup
}
\begin{document}

\title{Scalable Data Series Subsequence Matching with ULISSE}




\author{Michele Linardi         \and
        Themis Palpanas 
}


\institute{Michele Linardi \at
             LIPADE, Universit{\'e} de Paris\\
              \email{michele.linardi@parisdescartes.fr}           
           \and
           Themis Palpanas \at
               LIPADE, Universit{\'e} de Paris\\
              \email{themis@mi.parisdescartes.fr}
}

\date{Received: date / Accepted: date}

\maketitle

\begin{abstract}
Data series similarity search is an important operation and at the core of several analysis tasks and applications related to data series collections.
Despite the fact that data series indexes enable fast similarity search, all existing indexes can only answer queries of a single length (fixed at index construction time), which is a severe limitation.
In this work, we propose \textit{ULISSE}, the first data series index structure designed for answering similarity search queries of \emph{variable length} (within some range). 
Our contribution is two-fold.
First, we introduce a novel representation technique, which effectively and succinctly summarizes multiple sequences of different length. 
Based on the proposed index, we describe efficient algorithms for approximate and exact similarity search, combining disk based index visits and in-memory sequential scans. 
Our approach supports non Z-normalized and Z-normalized sequences, and can be used with no changes with both Euclidean Distance and Dynamic Time Warping, for answering both \emph{k-NN} and $\epsilon$-range queries.
We experimentally evaluate our approach using several synthetic and real datasets. 
The results show that \textit{ULISSE} is several times, and up to orders of magnitude more efficient in terms of both space and time cost, when compared to competing approaches.
(Paper published in VLDBJ 2020)
\end{abstract}

\section{Introduction}

\noindent{\bf Motivation.}
Data sequences are one of the most common data types, and they are present in almost every scientific and social domain (example application domains include meteorology, astronomy, chemistry, medicine, neuroscience, finance, agriculture, entomology, sociology, smart cities, marketing, operation health monitoring, human action recognition and others)~\cite{KashinoSM99,percomJournal,Shasha99,HuijseEPPZ14,DBLP:journals/sigmod/Palpanas15}.
This makes data series a data type of particular importance.

Informally, a data series (a.k.a data sequence, or time series) is defined as an ordered sequence of points, each one associated with a position and a corresponding value\footnote{If the dimension that imposes the ordering of the sequence is time then we talk about time series. Though, a series can also be defined over other measures (e.g., angle in radial profiles in astronomy, mass in mass spectroscopy in physics, 
	etc.). We use the terms \emph{data series}, \emph{time series}, and \emph{sequence} interchangeably.}.
Recent advances in sensing, networking, data processing and storage technologies have significantly facilitated the processes of generating and collecting tremendous amounts of data sequences from a wide variety of domains at extremely high rates and volumes. 

The \emph{SENTINEL-2} mission~\cite{SENTINEL-2} conducted by the European Space Agency (ESA) represents such an example of massive data series collection. 
The two satellites of this mission continuously capture multi-spectral images, designed to give a full picture of earth's surface every five days at a resolution of \textit{10m}, resulting in over five trillion different data series. 
Such recordings will help monitor at fine granularity the evolution of the properties of the surface of the earth, and benefit applications such as land management, agriculture and forestry, disaster control, humanitarian relief operations, risk mapping and security concerns.

\noindent{\bf Data series analytics.}
Once the data series have been collected, the domain experts face the arduous tasks of processing and analyzing them~\cite{KostasThemisTalkICDE,Palpanas2019,DBLP:journals/dagstuhl-reports/BagnallCPZ19} in order to identify patterns, gain insights, detect abnormalities, and extract useful knowledge.
Critical part of this process is the data series similarity search operation, which lies at the core of several analysis and machine learning algorithms (e.g., clustering~\cite{Niennattrakul:2007:CMT:1262690.1262948}, classification~\cite{Lines2015}, outliers~\cite{DBLP:conf/edbt/Senin0WOGBCF15,norma,series2graph}, and others).

However, similarity search in very large data series collections is notoriously challenging~\cite{DBLP:conf/sigmod/ZoumpatianosIP14,DBLP:conf/sofsem/Palpanas16,DBLP:conf/ieeehpcs/Palpanas17,DBLP:conf/ieeehpcs/Palpanas17,DBLP:conf/edbt/GogolouTPB19,conf/sigmod/gogolou20,DBLP:journals/pvldb/EchihabiZPB18,DBLP:journals/pvldb/EchihabiZPB19,}, due to the high dimensionality (length) of the data series.
In order to address this problem, a significant amount of effort has been dedicated by the data management research community to data series indexing techniques~\cite{evolutionofanindex,DBLP:journals/pvldb/EchihabiZPB18,DBLP:journals/pvldb/EchihabiZPB19}, which lead to fast and scalable similarity search~\cite{Faloutsos1994,Rafiei1998,DBLP:conf/vldb/PalpanasCGKZ04,Assent2008,shieh2008sax,DBLP:journals/kais/KadiyalaS08,DBLP:journals/pvldb/WangWPWH13,DBLP:journals/kais/CamerraSPRK14,DBLP:journals/pvldb/DallachiesaPI14,ZoumpatianosIP15,DBLP:journals/vldb/ZoumpatianosIP16,DBLP:conf/icdm/YagoubiAMP17,dpisaxjournal,DBLP:conf/bigdataconf/PengFP18,parisplus,peng2020messi,seriesgoneparallel,coconut,coconutpalm,DBLP:journals/vldb/KondylakisDZP19}.

\noindent{\bf Predefined constraints.}
%
Despite the effectiveness and benefits of the proposed indexing techniques, which have enabled and powered many applications over the years, they are restricted in different ways: either they only support similarity search with queries of a fixed size, or they do not offer a scalable solution. The solutions working for a fixed length, require that this length is chosen at index construction time (it should be the same as the length of the series in the index).

Evidently, this is a constraint that penalizes the flexibility needed by analysts, who often times need to analyze patterns of slightly different lengths (within a given data series collection)~\cite{DBLP:journals/kais/KadiyalaS08,914838,DBLP:conf/kdd/RakthanmanonCMBWZZK12,VALMOD,VALMOD_DEMO,valmodjournal,variablelengthanalytics,Palpanas2019,DBLP:journals/dagstuhl-reports/BagnallCPZ19}. 
This is true for several applications.
For example, in the \emph{SENTINEL-2} mission data, oceanographers are interested in searching for similar coral bleaching patterns\footnote{\scriptsize\url{http://www.esa.int/Our_Activities/Observing_the_Earth/
		%http://www.esa.int/Our_Activities/Observing_the_Earth/Copernicus/Sentinel-2
}} of different lengths; at Airbus\footnote{\scriptsize\url{http://www.airbus.com/}}
engineers need to perform similarity search queries for patterns of variable length when studying aircraft takeoffs and landings~\cite{Airbus}; and in neuroscience, analysts need to search in Electroencephalogram (EEG) recordings for Cyclic Alternating Patterns (CAP) of different lengths (duration), in order to get insights about brain activity during sleep~\cite{ROSA1999585}. 
In these applications, we have datasets with a very large number of fixed length data series, on which analysts need to perform a large number of ad hoc similarity 
queries of (slightly) different lengths (as shown in Figure~\ref{simSearchVL}). 

A straightforward solution for answering such queries would be to use one of the available indexing techniques. 
However, in order to support (exact) results for variable-length similarity search, we would need to (i) create several distinct indexes, one for each possible query length; and (ii) for each one of these indexes, index all overlapping subsequences (using a sliding window).
We illustrate this in Figure~\ref{simSearchVL}, where we depict two similarity search queries of different lengths ($\ell$ and $\ell'$). 
Given a data series from the collection, $D_i$ (shown in black), we draw in red the subsequences that we need to compare to each query in order to compute the exact answer. 
Using an indexing technique implies inserting all the subsequences in the index: since we want to answer queries of two different lengths, we are obliged to use two distinct indexes.

\begin{figure}[tb]
	\centering
	\includegraphics[trim={2cm 13cm 15cm 3cm},scale=0.56]{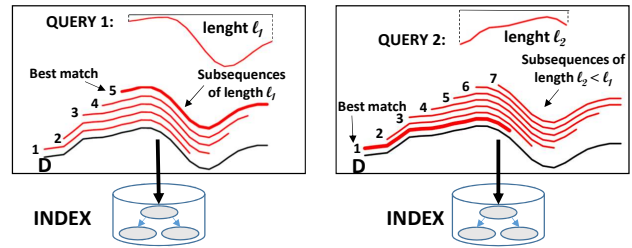}
	\caption{Indexing for supporting queries of 2 different lengths.}
	\label{simSearchVL}
\end{figure}

Nevertheless, this solution is prohibitively expensive, in both space and time. 
Space complexity is increased, since we need to index a large number of subsequences for each one of the supported query lengths: given a data series collection $C={D^{1},...,D^{|C|}}$
and a query length range $[\ell_{min},\ell_{max}]$, the number of subsequences we would normally have to examine (and index) is: 

$S_{\ell_{min},\ell_{max}} = \sum_{\ell=1}^{(\ell_{max}-\ell_{min})+1}\sum_{i=1}^{|C|}(|D^{i}|-(\ell-1))$.
Figure~\ref{spaceExplosionVL} shows how quickly this number explodes as the dataset size and the query length range increase:
considering the largest query length range ($S_{96-256}$) in the $20$GB dataset, we end up with a collection of subsequences (that need to be indexed) 5 orders of magnitude larger than the original dataset!
Computational time is significantly increased as well, since we have to construct different indexes for each query length we wish to support.

\begin{figure}[tb]
	\centering
	\includegraphics[trim={0cm 15cm 18cm 3cm},scale=0.65]{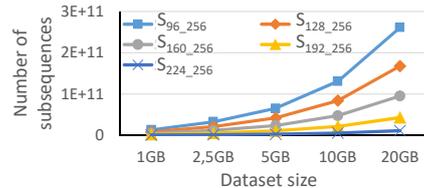}
	\caption{Search space evolution of variable length similarity search. Each dataset contains series of length $256$}
	\label{spaceExplosionVL}
\end{figure}

In the current literature, a technique based on multi-resolution indexes~\cite{914838,DBLP:journals/kais/KadiyalaS08} has been proposed in order to mitigate this explosion in size, by creating a smaller number of distinct indexes and performing more post-processing.
Nonetheless, this solution works exclusively for \emph{non} Z-normalized series\footnote{Z-normalization transforms a series so that it has a mean value of zero, and a standard deviation of one. This allows similarity search to be effective, irrespective of shifting (i.e., offset translation) and scaling\cite{DBLP:journals/datamine/KeoghK03}.} (which means that it cannot return results with similar trends, but different absolute values), and thus, renders the solution useless for a wide spectrum of applications.
Besides, it only mitigates the problem, since it still leads to a space explosion (albeit, at a lower rate), and therefore, it is not scalable, either.

We note that the technique discussed above (despite its limitations) is indeed the current state of the art, and no other technique has been proposed since, even though during the same period of time we have witnessed lots of activity and a steady stream of papers on the \emph{single-length} similarity search problem (e.g.,~\cite{DBLP:conf/vldb/PalpanasCGKZ04,Assent2008,shieh2008sax,DBLP:conf/icdm/CamerraPSK10,DBLP:journals/pvldb/WangWPWH13,DBLP:journals/kais/CamerraSPRK14,ZoumpatianosIP15,DBLP:journals/vldb/ZoumpatianosIP16,DBLP:conf/icdm/YagoubiAMP17,dpisaxjournal,DBLP:conf/bigdataconf/PengFP18,parisplus,peng2020messi,coconut,coconutpalm,DBLP:journals/vldb/KondylakisDZP19}). 
This attests to the challenging nature of the problem we are tackling in this paper.

\noindent{\bf Contributions.} 
In this work, we propose \textit{ULISSE} (ULtra compact Index for variable-length Similarity SEarch in data series), which is the first single-index solution that supports fast approximate and exact answering of variable-length (within a given range) similarity search queries for both non Z-normalized and Z-normalized data series collections. 
Our approach can be used with no changes with both Euclidean Distance and Dynamic Time Warping, for answering both \emph{k-NN} and $\epsilon$-range queries.

\textit{ULISSE} produces exact (i.e., correct) results, and is based on the following key idea: a data structure that indexes data series of length $\ell$, already contains all the information necessary for reasoning about any subsequence of length $\ell'<\ell$ of these series.
Therefore, the problem of enabling a data series index to answer queries of variable-length, becomes a problem of how to reorganize this information that already exists in the index.
To this effect, \textit{ULISSE} proposes a new summarization technique that is able to represent contiguous and overlapping subsequences, leading to succinct, yet powerful summaries: it combines the representation of several subsequences within a single summary, and enables fast (approximate and exact) similarity search for variable-length queries.

Our contributions can be summarized as follows\footnote{A preliminary version of this work has appeared elsewhere~\cite{ULISSEVldb,ulisse}.}: 
\begin{itemize}
	\item 
	We introduce the problem of Variable-Length Subsequences Indexing, which calls for a single index that can inherently answer queries of different lengths. 
	\item 
	We provide a new data series summarization technique, able to represent several contiguous series of different lengths. 
	This technique produces succinct, discretized envelopes for the summarized series, and can be applied to both non Z-normalized and Z-normalized data series. \item 
	Based on this summarization technique, we develop an indexing algorithm, which organizes the series and their discretized summaries in a hierarchical tree structure, namely, the \textit{ULISSE} index. 
	\item 
	We propose efficient exact and approximate k-NN algorithms, suitable for the \textit{ULISSE} index, which can compute similarity using either Euclidean Distance or Dynamic Time Warping measure. 
	\item 
	Finally, we perform an experimental evaluation with several synthetic and real datasets. 
	The results demonstrate the effectiveness and scalability of \textit{ULISSE} to dataset sizes that competing approaches cannot handle.
\end{itemize}

\noindent{\bf Paper Organization.}
The rest of this paper is organized as follows. 
Section~\ref{sec:relatedwork} discusses related work, and Section~\ref{sec:prelim} formulates the problem.
In Section~\ref{sec:ulisseFramework}, we describe the \textit{ULISSE} summarization techniques, and in Sections~\ref{sec:ulisseIndexing} and~\ref{sec:ulisseQuery} we explain our indexing and query answering algorithms.
Section~\ref{sec:experiments} describes the experimental evaluation, and we conclude in Section~\ref{sec:conclusions}.

\section{Related Work}
\label{sec:relatedwork}

\noindent{\bf Data series indexes.} 
The literature includes several techniques for data series indexing~\cite{DBLP:journals/pvldb/EchihabiZPB18},
which are all based on the same principle: they first reduce the dimensionality of the data series by applying some summarization technique (e.g., Piecewise Aggregate  Approximation (PAA)~\cite{Keogh2000}, or Symbolic Aggregate approXimation (SAX)~\cite{shieh2008sax}. 
However, all the approaches mentioned above share a common limitation: they can only answer queries of a fixed, predetermined length, which has to be decided before the index creation.

Faloutsos et al.~\cite{Faloutsos1994} proposed the first indexing technique suitable for variable length similarity search query. 
This technique extracts subsequences 
that are grouped in MBRs (Minimum Bounding Rectangles) and indexed using an R-tree.
We note that this approach works only for non Z-normalized sequences.
An improvement of this approach was proposed by Kahveci and Singh~\cite{914838}. 
They described MRI (Multi Resolution Index), which is a technique based on the construction of multiple indexes for variable length similarity search query. 

Storing subsequences at different resolutions (building indexes for different series lengths) provided a significant improvement over the earlier approach, since a greater part of a single query is considered during the search.
Subsequently, Kadiyala and Shiri~\cite{DBLP:journals/kais/KadiyalaS08} redesigned the MRI construction, in order to decrease the indexing size and construction time. 
This new indexing technique, called Compact Multi Resolution Index (CMRI), has a space requirement, which is 99\% smaller than the one of MRI. 
The authors also redefined the search algorithm, 
guaranteeing an improvement of the range search proposed upon the MRI index.

Loh et al.~\cite{DBLP:journals/datamine/LohKW04} proposed Index Interpolation for variable length subsequence similarity search for $\epsilon$-range queries.
This approach uses a single index that supports $\epsilon$-range search for subsequences of a fixed length that is smaller than the query length.
The search starts by considering a query prefix subsequence of the same length as the one supported by the index. 
During this process, the algorithm computes the distances between the original query and candidates of the same length, if the prefixes of these candidates have a distance to the query prefix smaller than the proposed bound.  
The authors proved the correctness of their solution, showing that their bounding strategy provides correct results for both non Z-normalized and Z-normalized subsequences.
We note that, based on the $\epsilon$-range search, this approach can also answer k-NN queries, thanks to the framework proposed by Han et al.~\cite{DBLP:conf/vldb/HanLMJ07}.

More recently, Wu et al.~\cite{KV-match} have proposed the KV-Match index, which supports $\epsilon$-range similarity search queries of variable length, using both Z-normalized Euclidean and DTW distances. 
The idea of this technique is similar to the CMRI one, since many indexes are built for different subsequence window lengths, which are considered at query time using multiple query segments. 
We note that for Z-normalized sequences, this method provides exact answers only for \textit{constrained} $\epsilon$-range search. 
To this effect, two new parameters that constrain the mean and the standard deviation of a valid result are considered at query answering time.

In contrast to CMRI and KV-Match, our approach uses a single index that is able to answer similarity search queries of variable length over larger datasets, and works for both non Z-normalized and Z-normalized series (a feature that is not supported by any of the previously introduced indexing techniques).


\noindent{\bf Sequential scan techniques.} 
Even though recent works have shown that sequential scans can be performed  efficiently~\cite{DBLP:conf/kdd/RakthanmanonCMBWZZK12,DBLP:conf/icdm/MueenHE14}, such techniques are mostly applicable when the dataset consists of a single, very long data series, and queries are looking for potential matches in small subsequences of this long data series. 
Such approaches, in general, do not provide any benefit when the dataset is composed of 
a large number of small data series, like in our case. 
Therefore, indexing is required in order to efficiently support data exploration tasks, which involve ad-hoc queries, i.e., the query workload is not known in advance.

\section{Preliminaries and Problem Formulation}
\label{sec:prelim}

Let a data series $D = d_1$,...,$d_{|D|}$ be a sequence of numbers $d_i \in  \mathbb{R}$, where $i \in \mathbb{N}$ represents the position in $D$. 
We denote the length, or size of the data series $D$ with $|D|$.
The subsequence $D_{o,\ell}$=$d_o$,...,$d_{o+\ell-1}$ of length $\ell$, is a contiguous subset of $\ell$ points of $D$ starting at offset $o$, where $1 \leq o \leq |D|$ and $ 1 \leq \ell \leq |D|-o+1$.
A subsequence is itself a data series.
A data series collection, $C$, is a set of data series. 

We say that a data series $D$ is Z-normalized, denoted $D^{n}$, when its mean $\mu$ is $0$ and its standard deviation $\sigma$ is $1$. 
The normalized version of $D = d_1,...,d_{|D|}$ is computed as follows: $D^{n} = \{\frac{d_1 - \mu}{\sigma},...,\frac{d_{|D|} - \mu}{\sigma}\}$. 
Z-normalization is an essential operation in several applications, because it allows similarity search irrespective of shifting and scaling~\cite{DBLP:journals/datamine/KeoghK03,DBLP:conf/kdd/RakthanmanonCMBWZZK12}.

\noindent{\bf Euclidean Distance.} Given two data series $D = d_{1},...,d_{|D|}$ and $D' = d'_{1},...,d'_{|D'|}$ of the same length (i.e., $|D| = |D'|$), we can calculate their Euclidean Distance 
as follows: $ED(D,D') = \sqrt{\sum_{i}^{|D|} d(d_{i},d'_{i})}$, where the distance function $d$ is applied to two real values, namely $A$ and $B$, as follows: $d(A,B) = (A-B)^2$.

\noindent{\bf Dynamic Time Warping.} The Euclidean distance is a lock-step measure, which is computed by summing up the distances between pairs of points that have the same positions in their respective series.
Dynamic Time Warping (DTW)~\cite{DTW} represents a more elastic measure, allowing for small mis-alignments of the matched points on the x-axis. 

Given two data series $D$ and $D'$, the DTW distance is computed by considering the differences between pairs of points $(d(d_{i},d'_{j}))$, where the indexes $i,j$ might be different.
In this manner, a particular alignment of $D$ and $D'$ is performed before to compute the distance. 
We define a sequence alignment as a vector of index pairs $A \in \mathbb{R}^{\ell \times 2}$, where $(i,j) \in A \iff 1 \le i,j \le \ell$, and $\ell$ is the length of the two series. 
The alignment of the Euclidean Distance is a special case, where the indexes are equal to their position in $A$.   
In the case of two series of length $\ell$, the space of the possible alignments spans the paths that join two cells in a squared matrix composed by $\ell^2$ cells. 
In Figure~\ref{matrixWarping}(a), we depict a Euclidean distance alignment of two series of length \textit{4}, which exactly crosses the diagonal of the matrix, joining the cells (\textit{1,1}) and (\textit{4,4}). 
On the other hand, in the same figure we report another possible alignment that we call warping alignment, which deviates from the diagonal. 
We use the terms \emph{warping path} and \emph{warping alignment} interchangeably.  

\begin{figure}[tb]
	\centering
	\includegraphics[trim={0cm 5cm 15cm 3cm},scale=0.5]{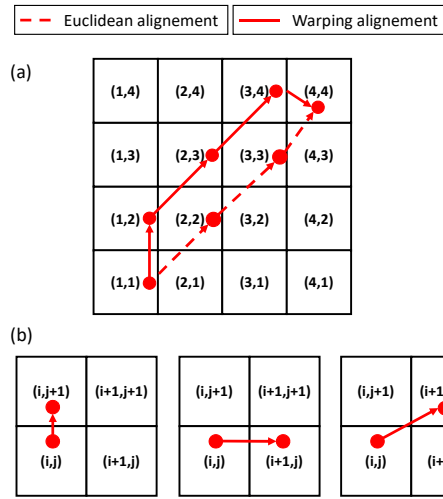}
	\caption{\textit{(a)} Euclidean and Warping alignment in a squared matrix. \textit{(b)} Valid index steps in a warping alignment.}
	\label{matrixWarping}
\end{figure}

In order to restrict the allowed paths, we can apply the following local constraints on the index pairs:
\begin{itemize}
	\item We require that the first and last pairs of $A$ correspond to the first and last pairs of points in $D$ and $D'$, respectively. If $|D|=|D'|=\ell$, we have $A[1]=(1,1)$ and $A[\ell]=(\ell,\ell)$.
	Furthermore, for any $(a,b),(c,d) \in A \iff (a \neq c) \vee (b \neq d)$. This latter, avoids to consider the same index pair twice in a single path. 
	\item Given $k \in \mathbb{N}$ ($1<k\le\ell$), we require that $A[k][0] - A[k-1][0] \le 1$ and $A[k][1] - A[k-1][1] \le 1$ always holds. This restricts each index to move by at most 1 unit to its next alignment position.
	\item Moreover, we always require that $A[k][0] - A[k-1][0] \ge 0$ and $A[k][1] - A[k-1][1] \ge 0$. This guarantees a monotonic movement of the path, towards the last index pair. In Figure~\ref{matrixWarping}(b), we depict the three possible steps that each index pair can perform in a valid alignment.     
\end{itemize} 

These constraints permit to bound the length of an alignment between two series of length $\ell$, between $\ell$ and $2 \times \ell-1$.    
Typically, warping paths are also subject to global constraints. We can thus set their maximum deviation from the matrix diagonal. In that regard, Sakoe and Chiba~\cite{Sakoe-Chiba} and Itakura~\cite{Itakura} proposed different warping path constraints, which restrict the matrix positions that a valid path can visit. 
The \textit{Sakoe-Chiba band}~\cite{Sakoe-Chiba} constraint allows each index of a warping path to be at most \textit{r} points far from the diagonal (Euclidean Distance alignment). On the other hand, the \textit{Itakura-parallelogram}~\cite{Itakura} constraint allows to choose different $r$ values depending on the index position $i$. In general, $r$ is called the \emph{warping window}.

Given a valid warping path, $A^{*}$, that satisfies the previously introduced constraints, we can formally define the DTW distance between two series $D$ and $D'$ of the same length $\ell$, as: $$DTW(D,D') = \underset{A^{*}}{\operatorname{argmin}}(\sqrt{\sum_{i}^{|A^{*}|} d(d_{A^{*}[i][0]},d'_{A^{*}[i][1]})}).$$
We note that computing the DTW distance corresponds to finding the valid alignment that minimizes the sum of the distances. 

In Figure~\ref{dtwMeasure}, we consider two series ($D$ and $D'$), which are extracted from two offsets that are \textit{5} points away, in the same long sequence. In this manner, the prefix of $D$ is equal to the suffix of $D'$, which starts at position \textit{6}. In the plots, the values of $D$ span the right vertical axis, whereas those of $D'$ the left one. If we compute the Euclidean distance, as depicted in Figure~\ref{dtwMeasure}(a), the fixed alignment of points does not capture the similarity of the two series. On the other hand, when computing the DTW distance, the warping path aligns the two similar parts, as reported in Figure~\ref{dtwMeasure}(b). At the bottom of the figure, we also report the warping path, which is constrained by a \textit{Sakoe-Chiba band}.

\begin{figure}[tb]
	\centering
	\includegraphics[trim={0cm 4cm 16cm 3cm},scale=0.56]{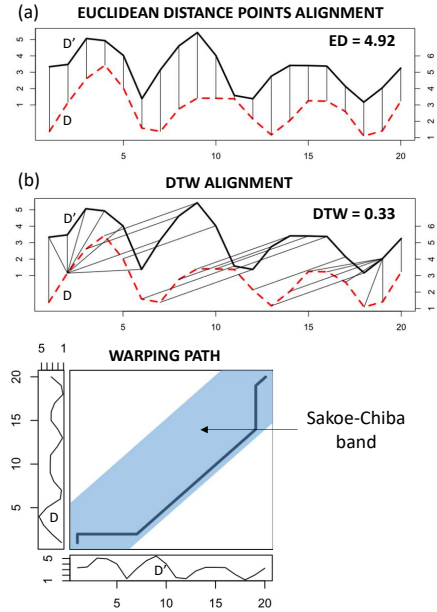}
	\caption{\textit{(a)} Euclidean distance alignment between the data series $D$ and $D'$. \textit{(b)} DTW Alignment between $D$ and $D'$. In the bottom part, the path is depicted in the $|D| \times |D'|$ matrix, contoured by the Sakoe-Chiba band.}
	\label{dtwMeasure}
\end{figure}

\noindent{\bf Problem Definition.} 
The problem we wish to solve in this paper is the following: 

\begin{problem}[Variable-Length Subsequences Indexing]
	Given a data series collection $C$, and a series length range $[\ell_{min},\ell_{max}]$, we want to build an index that supports exact similarity search, under the Euclidean and Dynamic Time Warping (DTW) measures, for queries of any length within the range $[\ell_{min},\ell_{max}]$. 
\end{problem}

In our case similarity search is formally defined as follows:

\begin{definition}[Similarity search]
	Given a data series collection $C=\{D^{1},...,D^{C}\}$, a series length range $[\ell_{min},\ell_{max}]$, a query data series $Q$, where $\ell_{min} \le |Q| \le \ell_{max}$, and $k \in \mathbb{N}$, we want to find the set $R=\{D^{i}_{o,\ell}| D^{i} \in C \wedge \ell=|Q| \wedge (\ell+o-1) \le |D^{i}|\}$, where $|R|=k$. We require that $ \forall D^{i}_{o,\ell} \in R$ $\nexists D^{i'}_{o',\ell'}$ $s.t.$ $dist(D^{i'}_{o',\ell'},Q) < dist(D^{i}_{o,\ell},Q)$, where $\ell'=|Q|$, $(\ell'+o'-1) \le |D^{i'}|$ and $D^{i'} \in C$. We informally call $R$, the $k$ \textit{nearest neighbors} set of $Q$. Given two generic series of the same length, namely $D$ and $D'$ the function $dist(d,d') = ED()$.
\end{definition}

In this work, we perform Similarity Search using either Euclidean Distance (ED) or Dynamic Time Warping (DTW), as the $dist$ function.



\subsection{The iSAX Index}


\begin{figure}[tb]
	\centering
	\hspace*{0.45cm}
	\includegraphics[trim={0cm 7cm 10cm 2cm},scale=0.58]{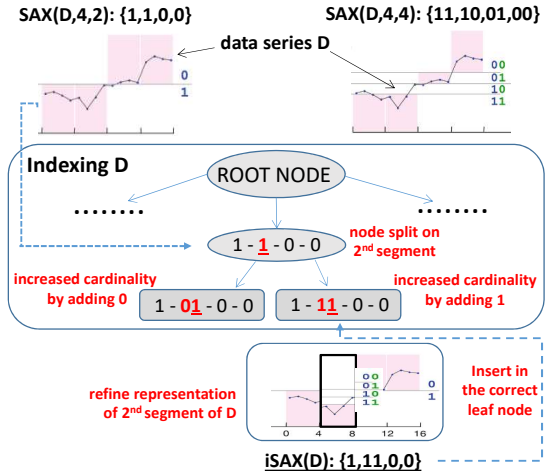}
	\caption{Indexing of series D (and an inner node split).}
	\label{FigurePAASAXISAX2}
\end{figure}

The Piecewise Aggregate Approximation (PAA) of a data series $D$, $PAA(D) = \{p_{1},...,p_{w}\}$, represents $D$ in a $w$-dimensional space by means of $w$ real-valued segments of length $s$, where the value of each segment is the mean of the corresponding values of $D$~\cite{Keogh2000}. 
We denote the first $k$ dimensions of $PAA(D)$, ($k \le w$), as  $PAA(D)_{1,..,k}$.
Then, the $iSAX$ representation of a data series $D$, denoted by $SAX(D,w,|alphabet|)$, is the representation of $PAA(D)$ by $w$ discrete coefficients, drawn from an alphabet of cardinality $|alphabet|$~\cite{shieh2008sax}. 

The main idea of the $iSAX$ representation (see Figure~\ref{FigurePAASAXISAX2}, top), is that the real-values space may be segmented by $|alphabet|-1$ breakpoints in $|alphabet|$ regions that are labeled by distinct symbols: binary values (e.g., with $|alphabet|=4$ the available labels are $\{00,01,10,11\}$). 
$iSAX$ assigns symbols to the $PAA$ coefficients, depending in which region they are located.    

The iSAX data series index is a tree data structure~\cite{shieh2008sax,DBLP:journals/kais/CamerraSPRK14}, consisting of three types of nodes (refer to Figure~\ref{FigurePAASAXISAX2}).
(i) The root node points to $n$ children nodes (in the worst case $n=2^w$, when the series in the collection cover all possible iSAX representations).
(ii) Each inner node contains the iSAX representation of all the series below it.
(iii) Each leaf node contains both the iSAX representation \emph{and} the raw data of all the series inside it (in order to be able to prune false positives and produce exact, correct answers). 
When the number of series in a leaf node becomes greater than the maximum leaf capacity, the leaf splits: it becomes an inner node and creates two new leaves, by increasing the cardinality of one of the segments of its iSAX representation. 
The two refined iSAX representations (new bit set to \textit{0} and \textit{1}) are assigned to the two new leaves. 

\section{The ULISSE framework}
\label{sec:ulisseFramework}


The key idea of the \textit{ULISSE} approach is the succinct summarization of \emph{sets} of series, namely, overlapping subsequences.
In this section, we present this summarization method.

\subsection{Representing Multiple Subsequences}

When we consider, contiguous and overlapping subsequences of different lengths within the range  $[\ell_{min},\ell_{max}]$ (Figure~\ref{ContainmentAreaEnvelope}(a), we expect the outcome as a bunch of similar series, whose differences are affected by the misalignment and  the different number of points. 
We conduct a simple experiment in Figure~\ref{ContainmentAreaEnvelope}(b), where we zero-align all the series shown in Figure~\ref{ContainmentAreaEnvelope}(a); we call those \textit{master series}. 

\begin{figure}[tb]
	\hspace*{-0.3cm}
	\includegraphics[trim={0cm 10cm 3cm 3cm},scale=0.68]{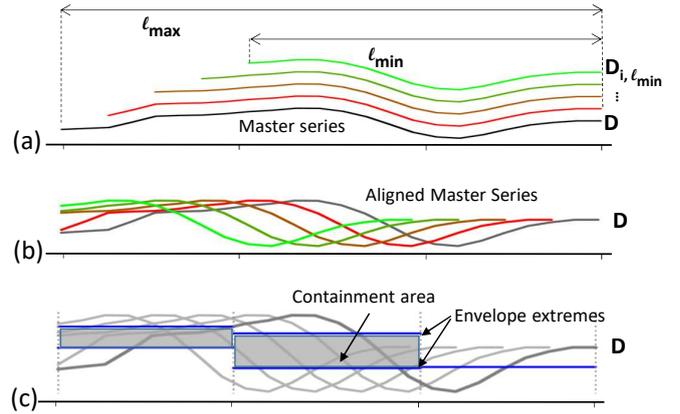}
	\caption{ \textit{a)} master series of $D$ in the length interval $\ell_{min},\ell_{max}$. \textit{b)} Zero-aligned master series. \textit{c)} Envelope built over the master series.}
	\label{ContainmentAreaEnvelope}
\end{figure}


\begin{definition}[Master Series]
	Given a data series $D$, and a subsequence length range $[\ell_{min},\ell_{max}]$, the master series are subsequences of the form $D_{i,min(|D|-i+1,\ell_{max})}$, for each $i$ such that $1 \le i \le |D|-(\ell_{min}-1)$, where $1 \le \ell_{min} \le \ell_{max} \le |D|$.
\end{definition}

We observe that the following property holds for the master series.
\begin{lemma}
	For any master series of the form $D_{i,\ell'}$, we have that $PAA(D_{i,\ell'})_{1,..,k} = PAA(D_{i,\ell''})_{1,..,k}$ holds for each $\ell''$ such that $\ell''\ge \ell_{min}$, $\ell''\le \ell' \le \ell_{max}$ and $\ell',\ell'' \% k = 0$.
\end{lemma}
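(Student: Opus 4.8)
The plan is to argue directly from the definition of PAA, exploiting the fact that a shorter master series at a given offset is literally a prefix of a longer one at the same offset. Recall that a PAA coefficient is just the arithmetic mean of one contiguous block of raw points, where the block (segment) length $s$ is a fixed parameter of the index; hence the $j$-th PAA coefficient of a subsequence $D_{o,\ell}$ equals $\frac{1}{s}\sum_{t=1}^{s} d_{o+(j-1)s+t-1}$. In particular, $PAA(D_{o,\ell})_{1,..,k}$ is a function of the raw points $d_o,\dots,d_{o+ks-1}$ \emph{only}: it does not depend on any point of $D_{o,\ell}$ beyond position $o+ks-1$, nor on the total length $\ell$, provided $\ell$ is large enough for those $k$ blocks to exist and to align with the PAA grid.

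First I would note that, since $\ell''\le\ell'$ and both master series start at the same offset $i$, we have $D_{i,\ell''}=d_i,\dots,d_{i+\ell''-1}$, which is exactly the prefix of length $\ell''$ of $D_{i,\ell'}=d_i,\dots,d_{i+\ell'-1}$; the two series therefore share the raw points $d_i,\dots,d_{i+\ell''-1}$. Next I would invoke the hypotheses $\ell'\%k=0$, $\ell''\%k=0$ and $\ell''\ge\ell_{min}$ to establish two facts: (i) the PAA partitions of $D_{i,\ell'}$ and of $D_{i,\ell''}$ split into whole segments whose block boundaries coincide, so that for every $j\le k$ the $j$-th segment of $D_{i,\ell''}$ spans exactly the same positions as the $j$-th segment of $D_{i,\ell'}$; and (ii) $D_{i,\ell''}$, being of length at least $\ell_{min}$, still contains all of the first $ks$ points needed for those $k$ coefficients — this uses that the index is configured so that the first $k$ PAA segments fit within the first $\ell_{min}$ points of any master series. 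Combining (i), (ii) and the prefix observation, for each $j\le k$ the $j$-th PAA coefficient of $D_{i,\ell''}$ is the mean of the very same raw points as the $j$-th PAA coefficient of $D_{i,\ell'}$, whence $PAA(D_{i,\ell'})_{1,..,k}=PAA(D_{i,\ell''})_{1,..,k}$. Applying the same argument to any intermediate length $\ell$ with $\ell''\le\ell\le\ell'$ (and $\ell\%k=0$) yields the stated equality for all such lengths at once, since each of them produces the identical shared prefix and PAA grid on its first $k$ blocks.

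I expect the only delicate point to be the bookkeeping behind facts (i) and (ii): one must make precise, from the relationship between $k$, the segment size $s$, and the admissible range $[\ell_{min},\ell_{max}]$, both that divisibility by $k$ forces the segment grids of the two series to agree on their first $k$ blocks, and that $ks\le\ell_{min}$ so that truncating a master series down to any length in $[\ell_{min},\ell_{max}]$ never perturbs those blocks. Once these are pinned down, the conclusion is immediate from linearity of the mean together with the prefix relationship, and requires no further computation.
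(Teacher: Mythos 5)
Your proposal is correct and follows essentially the same route as the paper's (very terse) proof: since the shorter subsequence at the same offset is a prefix of the master series and no normalization is applied, the first $k$ PAA coefficients — each being the mean of a fixed-length block of raw points within that shared prefix — coincide. Your added bookkeeping (fixed segment length $s$, aligned segment grids, and $ks\le\ell_{min}\le\ell''$ so the $k$ coefficients exist for the truncated series) just makes explicit what the paper leaves implicit.
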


\begin{proof}
	It trivially follows from the fact that, each non master series is always entirely overlapped by a master series. Since the subsequences are not subject to any scale normalization, their prefix coincides to the prefix of the equi-offset master series.
\end{proof}

Intuitively, the above lemma says that by computing only the $PAA$ of the master series in $D$, we are able to represent the $PAA$ prefix of any subsequence of $D$. 

When we zero-align the $PAA$ summaries of the master series, we compute the minimum and maximum $PAA$ values (over all the subsequences) for each segment: this forms what we call an \textit{Envelope} (refer to Figure~\ref{ContainmentAreaEnvelope}.c).
(When the length of a master series is not a multiple of the $PAA$ segment length, we compute the $PAA$ coefficients of the longest prefix, which is multiple of a segment.)
We call \textit{containment area} the space in between the segments that define the Envelope.

\subsection{PAA Envelope for Non-Z-Normalized Subsequences}

In this subsection, we formalize the concept of the \textit{Envelope}, introducing a new series representation.

We denote by $L$ and $U$ the $PAA$ coefficients, which delimit the lower and upper parts, respectively, of a containment area (see Figure~\ref{ContainmentAreaEnvelope}.c). 
Furthermore, we introduce a parameter $\gamma$, which corresponds to the number of master series we represent by the Envelope.
This allows to tune the number of subsequences of length in the range [$\ell_{min},\ell_{max}$], that a single Envelope represents, influencing both the tightness of a containment area and the size of the Index (number of computed Envelopes). We will show the effect of the relative tradeoff i.e., Tightness/Index size in the Experimental evaluation.
Given $a$, the point from where we start to consider the subsequences in $D$, and $s$, the chosen length of the PAA segment, we refer to an Envelope using the following signature:
\begin{align}
\qquad\qquad paaENV_{[D,\ell_{min},\ell_{max},a,\gamma,s]} = [L,U]
\end{align}


\subsection{PAA Envelope for Z-Normalized Subsequences}

So far we have considered that each subsequence in the input series $D$ is not subject of any scale normalization, i.e., is not Z-normalized. 
We introduce here a negative result, concerning the \emph{unsuitability} of a generic $paaENV_{[D,\ell_{min},\ell_{max},a,\gamma,s]}$ to describe subsequences that are Z-normalized.

Intuitively, we argue that the $PAA$ coefficients of a single master series $D_{i,a}$, generate a containment area, which may not embed the coefficients of the Z-normalized subsequence in the form $D'_{i,a'}$, for $a'<a$. 
This happens, because Z-normalization causes the subsequences of different lengths to change their shape, and even shift on the y-axis.
Figure~\ref{Z_norm_prob} depicts such an example. 

\begin{figure}[tb]
	\includegraphics[trim={0cm 14cm 0cm 3cm},scale=0.70]{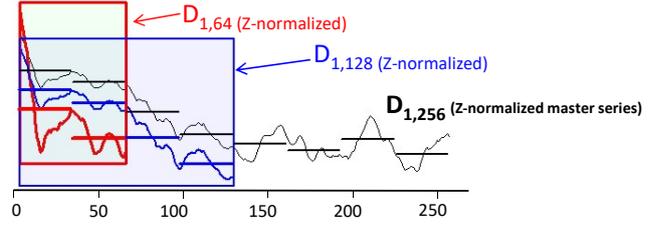}
	\caption{Master series $D_{1,256}$ with marked PAA coefficients.}
	\label{Z_norm_prob}
\end{figure}

We can now formalize this negative result.

\begin{lemma}
	A $paaENV_{[D,\ell_{min},\ell_{max},a,\gamma,s]}$ is not guaranteed to contain all the $PAA$ coefficients of the Z-normalized subsequences of lengths $[\ell_{min},\ell_{max}]$, of $D$. 
\end{lemma}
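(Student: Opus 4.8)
The statement is a negative (impossibility) result, so the natural strategy is proof by counterexample: exhibit a single data series $D$, a length range $[\ell_{min},\ell_{max}]$, an offset $a$, a window count $\gamma$, and a segment length $s$, such that the Envelope $paaENV_{[D,\ell_{min},\ell_{max},a,\gamma,s]}$ fails to contain the $PAA$ coefficients of at least one Z-normalized subsequence whose length lies in $[\ell_{min},\ell_{max}]$. Concretely, I would take $\gamma=1$ (so the Envelope summarizes exactly one master series $D_{a,a}$, making the $L$ and $U$ coefficients equal to the $PAA$ coefficients of the \emph{Z-normalized} master series on each segment), and then show that for some shorter offset $a'<a$ the $PAA$ value of $D^n_{a',\ell'}$ on some segment falls strictly outside the interval $[L_j,U_j]$ for that segment.

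\textbf{Key steps in order.} First I would make precise what the Envelope contains when the subsequences are Z-normalized: for each master series we compute the $PAA$ of its Z-normalized version, and $L_j = \min$, $U_j = \max$ of these values over the $\gamma$ master series, segment by segment. With $\gamma=1$ this collapses to $L_j = U_j = PAA(D^n_{a,\ell_{max}})_j$. Second, I would recall the key algebraic fact about Z-normalization: for a subsequence $D_{a',\ell'}$, the normalized value is $(d_i-\mu_{a',\ell'})/\sigma_{a',\ell'}$, where the mean $\mu$ and standard deviation $\sigma$ depend on the \emph{window}, so a prefix of a longer normalized series is \emph{not} equal to the normalized version of that prefix — the offset translation and scaling differ. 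This is precisely the gap Lemma~2 (the earlier non-normalized lemma) exploited and that breaks here. Third, I would construct an explicit $D$: something like a short ramp followed by a flat segment, or any series where a short prefix has very different local mean/variance from the whole window — e.g., $D$ chosen so that $\mu_{a',\ell'}$ and $\mu_{a,\ell_{max}}$ differ substantially, forcing the shifted-and-rescaled $PAA$ coefficient of $D^n_{a',\ell'}$ on the first segment to exceed $U$ (or fall below $L$). Fourth, I would do the short arithmetic verifying the containment violation for that example, referencing Figure~\ref{Z_norm_prob} as the illustrative instance.

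\textbf{Main obstacle.} The routine part is the arithmetic; the real care is needed in two places. (a) Handling $\gamma>1$: a skeptical reader might object that a counterexample with $\gamma=1$ is too special, since with larger $\gamma$ the containment area widens. I would address this by noting that the claim is only that the Envelope is "not guaranteed" to contain the coefficients, so a single $(\gamma=1)$ instance suffices; alternatively, I would argue that one can always pad $D$ with additional master series whose normalized $PAA$ values stay within a narrow band while the offending short subsequence at offset $a'$ still escapes it, so the phenomenon persists for any fixed $\gamma$. (b) Being careful that the subsequence we pick ($D_{a',\ell'}$ with $a'<a$) is genuinely one that the Envelope is meant to represent — i.e., that its length $\ell'$ is in $[\ell_{min},\ell_{max}]$ and its offset falls in the range of subsequences tied to the master series at $a$ — so the counterexample is not vacuous. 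Once those two points are pinned down, the proof reduces to the displayed computation on the constructed series.
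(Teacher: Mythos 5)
Your plan is essentially the paper's own proof: both argue by counterexample, collapsing the Envelope to a single master series so that $L=U$ (the paper takes $\gamma=0$ with $\ell_{max}=|D|=\ell_{min}+1$ and $D$ itself Z-normalized), and then use the window-dependence of the Z-normalization statistics to force the first $PAA$ coefficient of the shorter Z-normalized subsequence outside the degenerate containment area (the paper pins this down by making the extra point of $D$ equal to the prefix mean, so the means coincide but the standard deviations differ, breaking the required equality $\mu_{D_{1,\ell_{min}}}/\sigma_{D_{1,\ell_{min}}}=\mu_{D}/\sigma_{D}$). Apart from your harmless $\gamma$-indexing convention and the deferred explicit arithmetic, the two arguments coincide.
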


\begin{proof}
	To prove the correctness of the lemma, it suffices to pick such a case where a subsequence of $D$, namely $D_{a,\ell'}$, with $\ell_{min} \le \ell' \le \ell_{max}$, is not encoded by $paaENV_{[D,\ell_{min},\ell_{max},a,\gamma,s]}$. Formally, we should consider the case where $\exists k$ such that $PAA(D_{i,\ell'})_{k} > U_{k}$ or $PAA(D_{i,\ell'})_{k}  < L_{k}$. 
	We may pick a Z-normalized series $D$ choosing $\ell_{max} = |D| = \ell_{min}+1$ and $\gamma=0$. The resulting  $paaENV_{[D,\ell_{min}=\ell_{max}-1,\ell_{max}=|D|,i=1,\gamma=0,s]}$ obtains equal bounds, namely $L=U$. Let consider the z-normalized subsequence $D_{1,\ell_{min}}$. Its $PAA$ coefficients must be in the envelope. This implies that, $PAA(D_{1,\ell_{min}})_{1}= L_1 = U_1 \label{cvequation}$ must hold.
	If $s$ is the $PAA$ segment length, in the case of Z-normalization, $PAA(D_{1,\ell_{min}})_{1} = (((\sum_{i=1}^{s} d_i) - (\mu_{D_{1,\ell{min}}} \times s)) / \sigma_{D_{1,\ell{min}}})/s$ and $U_1 = (((\sum_{i=1}^{s} d_i) - (\mu_{D} \times s)) / \sigma_{D})/s$. Therefore, the following equation: $ (\mu_{D_{1,\ell{min}}} \times s) / \sigma_{D_{1,\ell{min}}} = (\mu_{D} \times s) / \sigma_{D}$ holds, which is equivalent to $\mu_{D_{1,\ell{min}}} / \sigma_{D_{1,\ell{min}}} = \mu_{D} / \sigma_{D}$. 
	At this point we may have that $\mu_{D} = \mu_{D_{1,\ell{min}}} $, when $D_{\ell{max},1} = \mu_{D_{1,\ell{min}}}$. This clearly leads to have a smaller dispersion on $D$ than $D_{1,\ell{min}}$ and thus $\sigma_{D} < \sigma_{D_{1,\ell{min}}} \Longrightarrow PAA(D_{1,\ell_{min}})_{1} \ne L_1 \ne U_1$. 
\end{proof}



If we want to build an Envelope, containing all the Z-normalized sequences, we need to take into account the shifted coefficients of the Z-normalized subsequences, which are not master series. 
Hence, each $PAA$ segment coefficient (in a master series) will be represented by the set of values resulting from the Z-normalizations of all the subsequences of length in [$\ell_{min},\ell_{max}$] that are not master series and contain that segment.

Given a generic master series $D_{i,\ell} =\{d_i,..d_{i+\ell-1}\}$, and $s$ the length of the segment, its $k^{th}$ $PAA$ coefficient set is computed by:
\inlineequation[setPaaNorm]{PAA^{*}(D_{i,\ell})_{k} = \{( \frac{(\sum_{p=s(k-1)+1}^{s(k-1)+s} d_p) - (\mu_{D_{i,\ell'}} \times s)}{  \sigma_{D_{i,\ell'}}   }) / s | \ell_{min} \le \ell' \le \ell_{max}, \ell' \ge (s(k-1)+s-(i-1))   \}}.

In Figure~\ref{exPaaStar}, we depict an example of $PAA^{*}$ computation for the first segment of the master series $D$.

\begin{figure}[tb]
	\includegraphics[trim={0cm 15cm 0cm 3cm},scale=0.72]{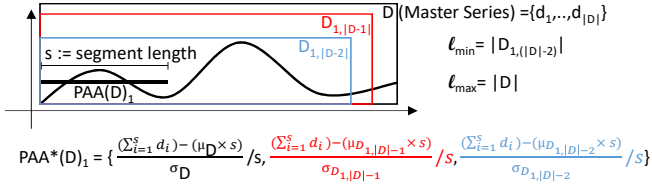}
	\caption{$PAA^{*}(D)_{1}$ computation. Since the first PAA segment (of length $s$) of the master series $D$, is also the first one of the two non master series $D_{1,|D-1|}$, $D_{1,|D-2|}$, three PAA coefficients are computed with the different normalizations.} 
	\label{exPaaStar}
\end{figure}

We can then follow the same procedure as before (in the case of non Z-normalized sequences), computing the minimum and maximum $PAA$ coefficients for each segment given by the above formula, in order to get the Envelope for the Z-normalized sequences (which we also denote with $paaENV$).

\subsection{Indexing the Envelopes}

Here, we define the procedure used to index the Envelopes. 
In that regard, we aim to adapt the $iSAX$ indexing mechanism (depicted in Figure~\ref{FigurePAASAXISAX2}).

Given a $paaENV$, we can translate its $PAA$ extremes into the relative iSAX representation: $uENV_{ paaENV_{[D,\ell_{min},\ell_{max},a,\gamma,s]}}=[iSAX(L),iSAX(U)]$,
where $iSAX(L)$ ($iSAX(U)$) is the vector of the minimum (maximum) $PAA$ coefficients of all the segments corresponding to the subsequences of $D$.

The \textit{ULISSE} Envelope, $uENV$, represents the principal building block of the \textit{ULISSE} index.
Note that, we might remove for brevity the subscript containing the parameters from the $uENV$ notation, when they are explicit.

\begin{figure}[tb]
	\hspace*{-0.4cm}
	\includegraphics[trim={-0.5cm 10cm 0cm 3cm},scale=0.8]{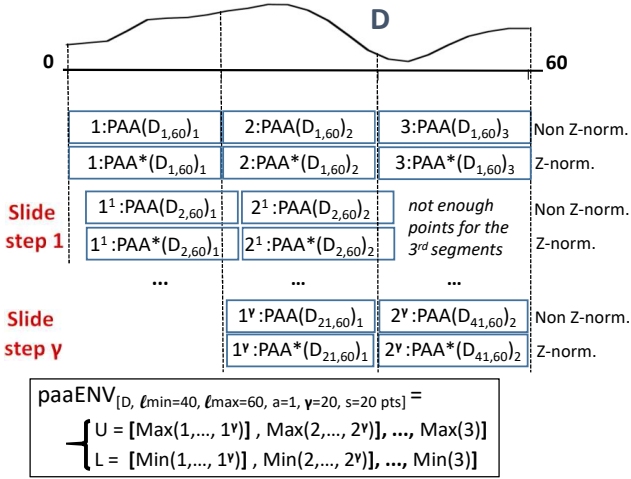}
	\caption{ $uENV$ building, with input: data series $D$ of length $60$, $PAA$ segment size $=20$, $\gamma = 20$, $\ell_{min} = 40$ and $\ell_{max} = 60$.}
	\label{BuildEnvelope}
\end{figure}

In Figure~\ref{BuildEnvelope}, we show a small example of envelope building, given an input series $D$. 
The picture shows the $PAA$ coefficients computation of the master series. 
They are calculated by using a sliding window starting at point $a=1$, which stops after $\gamma$ steps.
Note that the Envelope generates a containment area, which embeds all the subsequences of $D$ of all lengths in the range $[\ell_{min},\ell_{max}]$.

\section{Indexing Algorithm}
\label{sec:ulisseIndexing}

\subsection{Indexing Non-Z-Normalized Subsequences}
We are now ready to introduce the algorithms for building an $uENV$. 
Algorithm~\ref{AlgoEnvelopeNN} describes the procedure for non-Z-normalized subsequences.  
As we noticed, maintaining the running sum of the last $s$ points, i.e., the length of a $PAA$ segment (refer to Line~\ref{runSum}), allows us to compute all the $PAA$ values of the expected envelope in $O(w(\ell_{max}+\gamma))$ time in the worst case, where $\ell_{max}+\gamma$ is the points window we need to take into account for processing each master series, and $w$ is the number of $PAA$ segments in the maximum subsequence length $\ell_{max}$. 
Since $w$, is usually a very small number (ranging between 8-16), it essentially plays the role of a constant factor. 
In order to consider not more than $\gamma$ steps for each segment position, we store how many times we use it, to update the final envelope in the vector, in Line~\ref{updatesSegment}.  

\begin{algorithm}[tb]
	{	\scriptsize
		\SetAlgoLined
		\label{AlgoEnvelopeNN}
		\KwIn{\textbf{float}[] $D$, \textbf{int} $s$, \textbf{int} $\ell_{min}$, \textbf{int} $\ell_{max}$, \textbf{int} $\gamma$, \textbf{int} $a$ }
		\KwOut{$\mathbf{uENV[iSAX_{min},iSAX_{max}]}$}
		\scriptsize
		\BlankLine
		\textbf{int} w $\leftarrow$ $\lfloor\ell_{max} / s\rfloor$ \;
		\textbf{int} segUpdateList[S] $\leftarrow$ \{0,...,0\}\;\label{updatesSegment}
		\textbf{float} $U[w]\leftarrow\{-\infty,...,-\infty\}$, $L[w]\leftarrow$ $\{\infty,...,\infty\}$\;
		\uIf{$|D|-(i-1) \geq \ell_{min}$ } 
		{
			\textbf{float} paaRSum $\leftarrow$ 0\;
			\tcp{iterate the master series.}
			\For{i $\leftarrow$ a \emph{\KwTo} min($|D|$,$a+\ell_{max}+\gamma$) } 
			{	\tcp{running sum of paa segment}
				paaRSum $\leftarrow$ paaRSum + D[i]\; 	\label{runSum}
				\If{(j-a) $>$ $s$}
				{
					paaRSum $\leftarrow$ paaRSum - D[i-s]\;\label{runSumUpdate}
				}
				\For{z $\leftarrow$ 1 \emph{\KwTo} min($\lfloor$[i-(a-1)] $/$ s$\rfloor$,w)}
				{
					\If{segUpdatedList[z] $\le$ $\gamma$}
					{
						segUpdateList[z] ++\;
						\textbf{float} paa $\leftarrow$ (paaRSum $/$ s)\;
						$L[z]$ $\leftarrow$ $min(paa,L[z])$\;    
						$U[z]$ $\leftarrow$ $max(paa,U[z])$\;
					} 	
				}
			}
			$\mathbf{uENV}$ $\leftarrow$ [iSAX($L$),iSAX($U$)];\                                  
		}
		\Else{
			$\mathbf{uENV}$ $\leftarrow \emptyset$\;
		}
	}
	\caption{$uENV computation$}
\end{algorithm}

\subsection{Indexing Z-Normalized Subsequences}

In Algorithm~\ref{algoEnvNorm}, we show the procedure that computes an indexable Envelope for Z-normalized sequences, which we denote as $uENV_{norm}$. 
This routine iterates over the points of the overlapping subsequences of variable length (\textit{First loop} in Line~\ref{loopMasterSeries}), and performs the computation in two parts. 
The first operation consists of computing the sum of each $PAA$ segment we keep in the vector $PAAs$ defined in Line~\ref{vectorSegmentSum}. When we encounter a new point, we update the sum of all the segments that contain that point (Lines~\ref{startUpdSeg}-\ref{endUpdSeg}).
The second part, starting in Line~\ref{loopNonMasterandMasterSeries} (\textit{Second loop}), performs the segment normalizations, which depend on the statistics (mean and std.deviation) of all the subsequences of different length (master and non-master series), in which they appear.
During this step, we keep the sum and the squared sum of the window, which permits us to compute the mean and the standard deviation in constant time (Lines~\ref{mean},\ref{gamma}). We then compute the Z-normalizations of all the $PAA$ coefficients in Line~\ref{Z_normCoefficient}, by using Equation~\ref{setPaaNorm}.  

\begin{algorithm}[!tb]
	{\scriptsize
		\SetAlgoLined
		\label{algoEnvNorm}
		\KwIn{\textbf{float}[] $D$, \textbf{int} $s$, \textbf{int} $\ell_{min}$, \textbf{int} $\ell_{max}$, \textbf{int} $\gamma$, \textbf{int} a }
		\KwOut{$\mathbf{uENV_{norm}[iSAX_{min},iSAX_{max}]}$}
		\BlankLine
		
		\textbf{int} w $\leftarrow$ $\lfloor\ell_{max} / s\rfloor$ \; \label{segmentsNumber}
		\tcp{sum of PAA segments values }
		\textbf{float} PAAs [$\ell_{max} + \gamma - (s-1)$] $\leftarrow$ \{0,...,0\}\; 
		\label{vectorSegmentSum}
		\textbf{float} $U[w]\leftarrow\{-\infty,...,-\infty\}$, $L[w]\leftarrow \{\infty,...,\infty\}$\;
		\uIf{$|D|-(a-1)$ $\geq$ $\ell_{min}$} 
		{
			\textbf{int} nSeg$\leftarrow$ 1\;
			\textbf{float} accSum, accSqSum $\leftarrow$ 0\;
			\tcp{First loop: Iterate the points.}
			\For{i $\leftarrow$ a \emph{\KwTo} min($|D|$,(a+$\ell_{max}$+$\gamma$))} 
			{\label{loopMasterSeries}
				\tcp{update sum of PAA segments values}
				\If{$i-a>$ s}
				{\label{startUpdSeg}
					nSeg++\;
					PAAs[nSeg] $\leftarrow$ PAAs[nSeg-1] - D[i-s]\; 
				}
				PAAs[nSeg] += D[i]\; \label{endUpdSeg}
				\tcp{keep sum and squared sum.}
				accSum 	+= D[i], accSqSum += (D[i])$^2$\;\label{sumSquaredSum}	
				\tcp{the window contains enough points.}		
				\If{i-(a-1) $\geq \ell_{min}$}
				{
					acSAc $\leftarrow$ accSum, 
					acSqSAc $\leftarrow$ accSqSum\;
					\textbf{int} nMse $\leftarrow$ min($\gamma$+1,(i-(a-1)-$\ell_{min})+1$)\;
					\tcp{Normalizations of PAA coefficients.}
					\For{j $\leftarrow$ 1 \emph{\KwTo} nMse }
					{\label{loopNonMasterandMasterSeries}
						\textbf{int} wSubSeq $\leftarrow$ i-(a-1)-(j-1) \;			
						\If{wSubSeq $\leq \ell_{max}$}
						{
							\textbf{float} $\mu$ $\leftarrow$acSAc/wSubSeq\; \label{mean}
							\textbf{float} $\sigma$ $\leftarrow$ \label{gamma} $\sqrt{(\frac{acSqSAc}{wSubSeq} -\mu^2)}$\;					
							\textbf{int} nSeg $\leftarrow \lfloor$wSubSeq$\div s \rfloor $\;
							\For{z $\leftarrow$ 1 \emph{\KwTo} nSeg }
							{
								\textbf{float} a $\leftarrow$ PAAs[j+[(z-1)$\times$s]]\; 
								\textbf{float} b $\leftarrow$ s$\times\mu$\;  
								\textbf{float} paaNorm $\leftarrow$ $\frac{((a-b) / \sigma)}{s}$\;	\label{Z_normCoefficient}
								$L[z]$ $\leftarrow$ $min(paaNorm,L[z])$\;    
								$U[z]$ $\leftarrow$ $max(paaNorm,U[z])$\;
								\label{endComputingNormalization}			
							}
						}
						acSAc -= D[j], acSqSAc -= (D[j])$^2$\;
					}
				}	
			}
			
			$\mathbf{uENV_{norm}}$ $\leftarrow$ [iSAX($L$),iSAX($U$)];\                                  
		}
		\Else{
			$\mathbf{uENV_{norm}}$ $\leftarrow \emptyset$\;
		}
	}
	
	\caption{$uENV_{norm}$ computation}
	
\end{algorithm}

\begin{figure}[tb]
	\includegraphics[trim={0cm 9.5cm 0cm 3cm},scale=0.80]{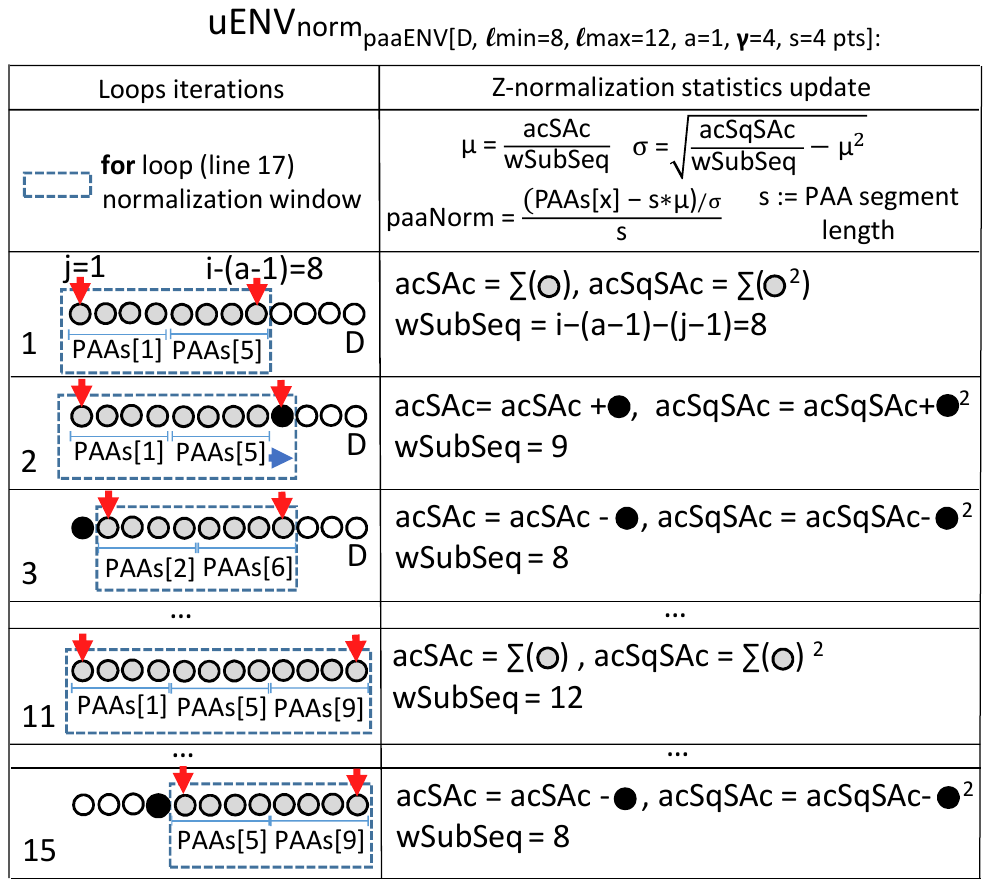}
	\caption{Running example of Algorithm~\ref{algoEnvNorm}. \textit{Left column}) Points iteration, the dashed squared contours the subsequence used to normalize the PAA coefficients in the Second loop. \textit{Right column}) Statistics update at each step, which serve the computation of $\mu$ and $\sigma$ of each possible coefficients normalization.}
	\label{exAlgo2}
\end{figure}

In Figure~\ref{exAlgo2}, we show an example that illustrates the operation of the algorithm.
In \textit{1}, the \textit{First loop} has iterated over \textit{8} points (marked with the dashed square). Since they form a subsequence of length $\ell_{min}$, the \textit{Second loop} starts to compute the Z-normalized PAA coefficients of the two segments, computing the mean and the standard deviation using the sum ($acSAc$) and squared sum ($acSqAc$) of the points considered by the \textit{First loop} (gray circles).
The second step takes place after that the \textit{First loop} has considered the $9^{th}$ point (black circle) of the series. Here, the \textit{Second loop} updates the sum and the squared sum, with the new point, calculating then the corresponding new Z-normalized PAA coefficients. At step \textit{3}, the algorithm considers the second subsequence of length $\ell_{min}$, which is contained in the nine points window.
The \textit{Second loop} considers in order all the overlapping subsequences, with different prefixes and length.
This permits to update the statistics (and all possible normalizations) in constant time.
The algorithm terminates, when all the points are considered by the \textit{First loop}, and the \textit{Second loop} either encounters a subsequence of length $\ell{min}$ (as depicted in the step \textit{15}), or performs at most $\gamma$ iterations, since all the subsequences starting at position $a+\gamma+1$ or later (if any) will be represented by other Envelopes.   

\subsubsection{Complexity Analysis} \textit{}\\
Given $w$, the number of PAA segments in the window of length $\ell_{max}$, and $M=\ell_{max}-\ell_{min}+\gamma$, the number of master series we need to consider, building a normalized Envelope, $uENV_{norm}$, takes $O(M \gamma w$) time.\\ 



\subsection{Building the index}

\begin{algorithm}[!tb]
	{\scriptsize
		\SetAlgoLined
		\label{algoBuildUlisse}
		\KwIn{\textbf{Collection} $C$, \textbf{int} $s$, \textbf{int} $\ell_{min}$, \textbf{int} $\ell_{max}$, \textbf{int} $\gamma$, \textbf{bool} $bNorm$ }
		\KwOut{\textbf{\textit{ULISSE} index} I}
		\scriptsize
		\ForEach{D \textbf{in} C}
		{
			$\mathbf{int} a' \leftarrow \emptyset$\;
			$\mathbf{uENV}$ $E \leftarrow \emptyset$\;
			\While{true}
			{	
				\uIf{bNorm}
				{
					$E \leftarrow uENV_{norm}(D,s,\ell_{min},\ell_{max},\gamma,a')$\;
				}
				\Else
				{
					$E \leftarrow uENV(D,s,\ell_{min},\ell_{max},\gamma,a')$\;
				}	
				$a' \leftarrow a' + \gamma + 1$ \;
				\If{$E == \emptyset$}
				{
					\textbf{break}\;
				}
				$bulkLoadingIndexing(I,E)$\;\label{bulkloading}
				$I.inMemoryList.add(maxCardinality(E))$\;\label{listEvelope}
				
			}
		}  
	}
	\caption{\textit{ULISSE} index computation}	
\end{algorithm}

We now introduce the algorithm, which builds a \textit{ULISSE} index upon a data series collection. 
We maintain the structure of the $iSAX$ index~\cite{DBLP:journals/kais/CamerraSPRK14}, introduced in the preliminaries. 

Each \textit{ULISSE} internal node stores the Envelope $uENV$ that represents all the sequences in the subtree rooted at that node. 
Leaf nodes contain several Envelopes, which by construction have the same $iSAX(L)$. 
On the contrary, their $iSAX(U)$ varies, since it get updated with every new insertion in the node. 
(Note that, inserting by keeping the same $iSAX(U)$ and updating $iSAX(L)$ represents a symmetric and equivalent choice.)

In Figure~\ref{ULiSSE_index}, we show the structure of the \textit{ULISSE} index during the insertion of an Envelope (rectangular/yellow box). 
Note that insertions are performed based on $iSAX(L)$ (underlined in the figure). 
Once we find a node with the same $iSAX(L)=(1-0-0-0)$ (Figure~\ref{ULiSSE_index}, $1^{st} step$)
if this is an inner node, we descend its subtree (always following the $iSAX(L)$ representations) until we encounter a leaf.
During this path traversal, we also update the $iSAX$ representation of the Envelope we are inserting, by increasing the number of bits of the segments, as necessary. 
In our example, when the Envelope arrives at the leaf, it has increased the cardinality of the second segment to two bits: $iSAX(L)=(1-${\bf 10}$-0-0)$, and similarly for $iSAX(U)$ (Figure~\ref{ULiSSE_index}, $2^{nd} step$).
Along with the Envelope, we store in the leaf a pointer to the location on disk for the corresponding raw data series.
We note that, during this operation, we do not move any raw data into the index.

To conclude the insertion operation, we also update the $iSAX(U)$ of the nodes visited along the path to the leaf, where the insertion took place.
In our example, we update the upper part of the leaf Envelope to $iSAX(U)=(1-${\bf 11}$-0-0)$, as well as the upper part of the Envelope of the leaf's parent to $iSAX(U)=(1-${\bf 1}$-0-0)$ (Figure~\ref{ULiSSE_index}, $3^{rd} step$).
This brings the \emph{ULISSE} index to a consistent state after the insertion of the Envelope.

Algorithm~\ref{algoBuildUlisse} describes the procedure, which iterates over the series of the input collection $C$, and inserts them in the index. 
Note that function $bulkLoadingIndexing$ in Line~\ref{bulkloading} may use different bulk loading techniques. In our experiments, we used the iSAX 2.0 bulk loading algorithm \cite{DBLP:conf/icdm/CamerraPSK10}.
Alongside the index, we also keep in memory (using the raw data order) all the Envelopes, represented by the symbols of the highest $iSAX$ cardinality available (Line~\ref{listEvelope}). 
This information is used during query answering.

\begin{figure}[tb]
	\includegraphics[trim={0cm 8cm 0cm 3cm},scale=0.63]{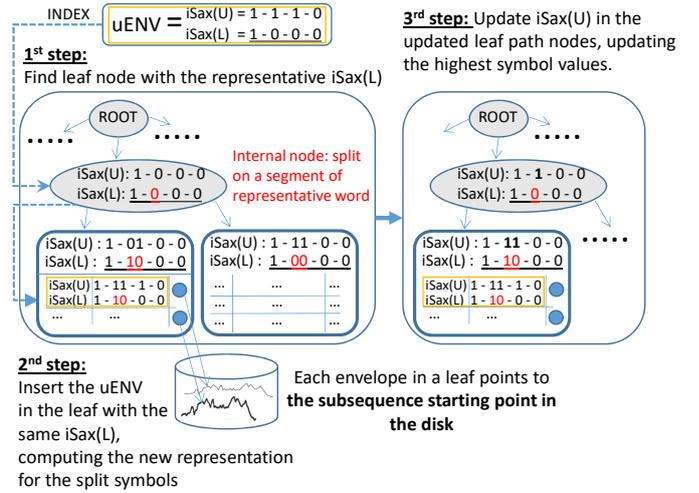}
	\caption{Envelope insertion in an \textit{ULISSE} index. $iSAX(L)$ is chosen to accommodate the Envelopes inside the nodes.}
	\label{ULiSSE_index}
\end{figure}

\subsubsection{Space complexity analysis}
The index space complexity is equivalent for the case of Z-normalized and non Z-normalized sequences. The choice of $\gamma$ determines the number of $Envelopes$ generated and thus the index size.
Hence, given a data series collection $C=\{D^{1},...,D^{|C|}\}$ the number of extracted Envelopes is given by $N= (\sum_{i}^{|C|} \lfloor \frac{|D^{i}|}{\ell_{min} + \gamma}\rfloor$). 
If $w$ PAA segments are used to discretize the series, each $iSAX$ symbol is represented by a single byte (binary label) and the disk pointer in each Envelope occupies $b$ bytes (in general \textit{8} bytes are used). The final space complexity is $O((2w)bN)$.

\section{Similarity Search with ULISSE}
\label{sec:ulisseQuery}

In this section, we present the building blocks of the similarity search algorithms we developed for the \textit{ULISSE} index, for both the Euclidean and the DTW distances, and both \emph{k-NN} and $\epsilon$-range queries. 

We note that the same index structure supports both distance measures. 
When the query arrives, and depending on the distance measure we have chosen, we use the corresponding lower bounding and real distance formulas.
We elaborate on these procedures in the following sections.

\subsection{Lower Bounding Euclidean Distance}

The iSAX representation allows the definition of a distance function, which lower bounds the true Euclidean~\cite{shieh2008sax}. 
This function compares the $PAA$ coefficients of the first data series, against the iSAX breakpoints (values) that delimit the symbol regions of the second data series. 

Let $\beta_{u}(S)$ and $\beta_{l}(S)$ be the breakpoints of the iSAX symbol $S$.
We can compute the distance between a $PAA$ coefficient and an iSAX region using:

\begin{equation}
\begin{gathered}
{\small distLB(PAA(D)_{i},iSAX(D')_{i}) = } \\
\begin{cases}
\scriptstyle (\beta_{u}(iSAX(D')_{i}) - PAA(D)_{i})^2&if \scriptstyle \beta_{u} (iSAX(D')_{i}) < PAA(D)_{i} \\
\scriptstyle (\beta_{l}(iSAX(D')_{i}) - PAA(D)_{i})^2&if \scriptstyle \beta_{l}(iSAX(D')_{i})  > PAA(D)_{i} \\
\scriptstyle 0 & {\footnotesize \text{otherwise.}}
\end{cases}	
\end{gathered}
\end{equation}

In turn, the lower bounding distance between two equi-length series $D$,$D'$, represented by $w$ PAA segments and $w$ $iSAX$ symbols, respectively, is defined as:

\begin{equation}
\small
\label{mindistPAAiSAX}
\begin{gathered}
\qquad mindist_{PAA\_iSAX}(PAA(D),iSAX(D')) = \\   \sqrt{\frac{|D|}{w}} \sqrt{\sum_{i=1}^{w} distLB(PAA(D)_{i},iSAX(D')_{i})}.
\end{gathered}
\end{equation}

We rely on the following proposition~\cite{Lin2007}:

\begin{proposition}
	Given two data series $D,D'$, where {\small $|D|=|D'|$,	 $mindist_{PAA\_iSAX}(PAA(D),iSAX(D')) \le ED(D,D')$}. 
\end{proposition}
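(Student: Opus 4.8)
The plan is to prove the two-part claim that $mindist_{PAA\_iSAX}$ lower bounds the true Euclidean distance by decomposing it into the two well-known ingredients behind the original SAX lower bound. First I would establish the PAA-level inequality: for two equi-length series $D,D'$ with PAA representations $PAA(D)=\{p_1,\dots,p_w\}$ and $PAA(D')=\{p'_1,\dots,p'_w\}$, the quantity $\sqrt{|D|/w}\sqrt{\sum_{i=1}^w (p_i-p'_i)^2}$ is no larger than $ED(D,D')$. This is a direct consequence of the Cauchy--Schwarz inequality applied segment by segment: within a segment of length $s=|D|/w$, the sum of squared pointwise differences is at least $s$ times the square of the mean difference, and the mean difference is exactly $p_i-p'_i$ because PAA is linear. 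Summing over the $w$ segments and taking square roots gives the claim.

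Second, I would replace $PAA(D')$ by $iSAX(D')$ and argue that this only \emph{decreases} each per-segment contribution, so the inequality is preserved. The key observation is that $distLB(PAA(D)_i,iSAX(D')_i)$ is, by its very definition (the three-case formula above), the squared distance from the real value $PAA(D)_i$ to the \emph{interval} $[\beta_l(iSAX(D')_i),\beta_u(iSAX(D')_i)]$ of the iSAX cell that contains $D'$. Since by construction $PAA(D')_i$ lies inside that interval, the distance from $PAA(D)_i$ to the whole interval is at most the distance from $PAA(D)_i$ to the single point $PAA(D')_i$; i.e. $distLB(PAA(D)_i,iSAX(D')_i)\le (p_i-p'_i)^2$ for every $i$. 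Chaining this with the PAA-level bound from the first step yields
$$
mindist_{PAA\_iSAX}(PAA(D),iSAX(D')) \le \sqrt{\tfrac{|D|}{w}}\sqrt{\sum_{i=1}^w (p_i-p'_i)^2} \le ED(D,D').
$$

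The main obstacle, such as it is, is making the per-segment Cauchy--Schwarz step fully rigorous and handling the bookkeeping when $|D|$ is not an exact multiple of $w$ (the paper's convention of taking the longest prefix that is a multiple of a segment must be invoked so that every segment genuinely has length $s$). Everything else is essentially a restatement of the argument of Lin et al.~\cite{Lin2007} and Shieh and Keogh~\cite{shieh2008sax}; since the proposition is explicitly attributed to \cite{Lin2007}, I expect the proof to simply cite that work, or at most sketch the two steps above. No new idea specific to \emph{ULISSE} is needed here, because the proposition concerns two fixed equi-length series and does not yet involve Envelopes; the adaptation to Envelopes and variable lengths is presumably carried out in the lemmas that follow.
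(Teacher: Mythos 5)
Your proof is correct: the two steps you give (the per-segment Cauchy--Schwarz argument showing the PAA distance lower bounds $ED$, followed by relaxing the point $PAA(D')_i$ to the iSAX interval containing it, which can only shrink each term) are exactly the standard argument behind this bound. The paper itself offers no proof of this proposition---it simply invokes it by citing Lin et al.~\cite{Lin2007}---so your reconstruction coincides with the argument the paper implicitly relies on, and your handling of the prefix convention for lengths not divisible by $w$ is the only bookkeeping the paper leaves unstated.
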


Since our index contains Envelope representations, we need to adapt Equation~\ref{mindistPAAiSAX}, in order to lower bound the distances between a data series $Q$, which we call query, and a set of subsequences, whose iSAX symbols are described by the Envelope\\ $uENV_{paaENV_{[D,\ell_{min},\ell_{max},a,\gamma,s]}}=[iSAX(L),iSAX(U)]$. 

Therefore, given $w$, the number of PAA coefficients of $Q$, that are computed using the Envelope PAA segment length $s$ on the longest multiple prefix, we define the following function:

\begin{equation}
\begin{gathered} 
\label{mindistPAAuENV}
\qquad {\small mindist_{ULiSSE}(PAA(Q),uENV_{paaENV_{...}})=}\\
\small \sqrt{s} \small \sqrt{\sum_{i=1}^{w} \begin{cases}
	\scriptstyle (PAA(Q)_{i}-\beta_{u}(iSAX(U)_{i}))^2,& \scriptstyle if (*)\\
	\scriptstyle (PAA(Q)_{i}-\beta_{u}(iSAX(L)_{i}))^2,& \scriptstyle if (**)\\
	\scriptstyle 0 &  {\footnotesize \text{otherwise.}}
	\end{cases}}\\
(*) \scriptstyle \beta_{u} (iSAX(U)_{i}) < PAA(Q)_{i}\\
(**) \scriptstyle \beta_{l} (iSAX(L)_{i}) > PAA(Q)_{i}
\end{gathered} 
\end{equation}

\begin{figure}[tb]
	\centering
	\includegraphics[trim={0cm 15.5cm 20cm 3cm},scale=1.1]{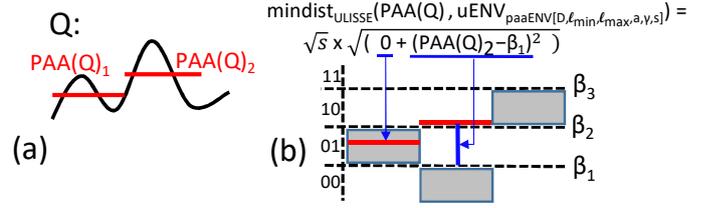}
	\caption{Given the PAA representation of a query $Q$ (\textit{a}) and $uENV_{ paaENV_{[D,\ell_{min},\ell_{max},a,\gamma,s]}}$ (\textit{b}) we compute their $mindist_{ULiSSE}$. The $iSAX$ space is delimited with dashed lines and the relative breakpoints $\beta_{i}$.}
	\label{ExLBcomp}
\end{figure}

In Figure~\ref{ExLBcomp}, we report an example of $mindist_{ULiSSE}$ computation between a query $Q$, represented by its PAA coefficients, and an Envelope in the iSAX space.

\begin{proposition}
	Given two data series
	{\small  $Q$,$D$,\\$mindist_{ULiSSE}(PAA(Q),uENV_{paaENV_{[D,\ell_{min},\ell_{max},a,\gamma,s]}})\le ED(Q,D_{i,|Q|})$, for each $i$ such that $a \le i \le a+\gamma+1$ and $|D|-(i-1) \ge \ell_{min}$
	}. 
\end{proposition}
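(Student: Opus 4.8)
This is the lower-bounding property that makes the $uENV$ usable for pruning, so the plan is to reduce it to the classical $PAA$/$iSAX$ lower bound, applied one $PAA$ segment at a time.

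First I would fix an admissible $i$ (with $a\le i\le a+\gamma+1$ among the offsets of the master series the Envelope represents, with $|D|-(i-1)\ge\ell_{min}$, and with $|Q|\le|D|-i+1$ so that $D_{i,|Q|}$ is a genuine subsequence), and recall $|Q|\le\ell_{max}$ by the problem setup. Then $|Q|\le\min(|D|-i+1,\ell_{max})$, i.e.\ $D_{i,|Q|}$ is a prefix of the master series at offset $i$. Put $w=\lfloor|Q|/s\rfloor$, the number of full $s$-point segments of $Q$. Since $D_{i,|Q|}$ and that master series share their first (at least) $ws$ raw points and no normalization intervenes, the first Lemma above (the $PAA$-prefix property of master series) gives $PAA(D_{i,|Q|})_j=PAA(D_{i,\cdot})_j$ for $j=1,\dots,w$; and because $L_j$ and $U_j$ are by construction the minimum and the maximum of the $j$-th master-series $PAA$ coefficient over all master series aggregated by the Envelope, we obtain $L_j\le PAA(D_{i,|Q|})_j\le U_j$. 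As $iSAX(L)_j$ (resp.\ $iSAX(U)_j$) is the $iSAX$ symbol whose region contains $L_j$ (resp.\ $U_j$) and $L_j\le U_j$, this yields $\beta_l(iSAX(L)_j)\le PAA(D_{i,|Q|})_j<\beta_u(iSAX(U)_j)$ for every $j\le w$; I will call this interval the containment interval of segment $j$.

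Next I would bound the Euclidean distance segment by segment. For the $s$ points $q_1,\dots,q_s$ of $Q$ in segment $j$ and the corresponding points $r_1,\dots,r_s$ of $D_{i,|Q|}$, Cauchy--Schwarz applied to the vector of differences gives $\sum_{l=1}^{s}(q_l-r_l)^2\ge s\,(PAA(Q)_j-PAA(D_{i,|Q|})_j)^2$. I would then split on where $PAA(Q)_j$ sits relative to the containment interval: if $PAA(Q)_j$ exceeds its upper end $\beta_u(iSAX(U)_j)$ (case $(*)$ of Eq.~(\ref{mindistPAAuENV})), then a fortiori $PAA(Q)_j>PAA(D_{i,|Q|})_j$, hence $(PAA(Q)_j-\beta_u(iSAX(U)_j))^2\le(PAA(Q)_j-PAA(D_{i,|Q|})_j)^2$; symmetrically, if $PAA(Q)_j$ lies below the lower end of the containment interval (case $(**)$), the case term --- the squared distance from $PAA(Q)_j$ to the nearest breakpoint of $iSAX(L)_j$ --- is $\le(PAA(Q)_j-PAA(D_{i,|Q|})_j)^2$, since $PAA(D_{i,|Q|})_j$ is at or above that end; and if $PAA(Q)_j$ falls inside the interval the case term is $0$. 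So in every case the $j$-th summand inside $mindist_{ULiSSE}$ is at most $(PAA(Q)_j-PAA(D_{i,|Q|})_j)^2$.

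Finally I would assemble: combining the two inequalities gives $s\cdot(\text{$j$-th case term})\le\sum_{l\in\text{segment }j}(q_l-r_l)^2$; summing over $j=1,\dots,w$ and adding the (nonnegative) squared differences on the at most $|Q|-ws$ trailing points of $Q$ yields $s\sum_{j=1}^{w}(\text{$j$-th case term})\le\sum_{l=1}^{|Q|}(q_l-r_l)^2=ED(Q,D_{i,|Q|})^2$; taking square roots and pulling $\sqrt s$ outside is exactly $mindist_{ULiSSE}(PAA(Q),uENV_{paaENV_{...}})\le ED(Q,D_{i,|Q|})$. The step I expect to be most delicate is the middle one: I must make sure that the specific breakpoint used in Eq.~(\ref{mindistPAAuENV}) is never farther from $PAA(Q)_j$ than $PAA(D_{i,|Q|})_j$ is --- which rests entirely on the containment-interval inclusion established in the first step --- and also check that the ``longest multiple prefix'' truncation used when a master-series length is not a multiple of $s$ still covers every query segment $j\le w$, which it does because $|Q|$ is bounded both by $\ell_{max}$ and by the length of the master series at offset $i$.
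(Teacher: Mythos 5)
Your argument is correct and follows essentially the same route as the paper's proof sketch: both rest on the containment of the represented subsequences' PAA coefficients between $\beta_{l}(iSAX(L)_j)$ and $\beta_{u}(iSAX(U)_j)$, and then reduce each branch of Equation~\ref{mindistPAAuENV} to the classical per-segment PAA/iSAX lower bound --- you simply make explicit what the paper leaves implicit, deriving the containment from the master-series lemma and re-proving the PAA-to-Euclidean step via Cauchy--Schwarz instead of citing the $mindist_{PAA\_iSAX} \le ED$ proposition. Your reading of case $(**)$ as the distance to the lower breakpoint of $iSAX(L)_j$ (mirroring $distLB$) is the intended one, and it is exactly how the paper's proof treats the ``symmetric'' branch.
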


\begin{proof}(sketch)
	We may have two cases, when $mindist_{ULiSSE}$ is equal to zero, the proposition clearly holds, since Euclidean distance is non negative. 
	On the other hand, the function yields values greater than zero, if one of the first two branches is true. 
	Let consider the first (the second is symmetric). 
	If we denote with $D''$ the subsequence in $D$, such that $ \beta_{l} (iSAX(U)_{i}) \le PAA(D'')_{i} \le \beta_{u} (iSAX(U)_{i})$, we know that the upper breakpoint of the $i^{th}$ iSAX symbol, of each subsequence in $D$, which is represented by the Envelope, must be less or equal than $\beta_{u}(iSAX(U)_{i})$. 
	It follows that, for this case, Equation~\ref{mindistPAAuENV} is equivalent to\\$distLB(PAA(Q)_{i},iSAX(D'')_{i})$, which yields the shortest lower bounding distance between the $i^{th}$ segment of points in $D$ and $Q$. 
\end{proof}

\subsection{Lower Bounding Dynamic Time Warping}

\begin{figure}[tb]
	\centering
	\hspace*{-0.3cm}
	\includegraphics[trim={0cm 4cm 18cm 3cm},scale=0.78]{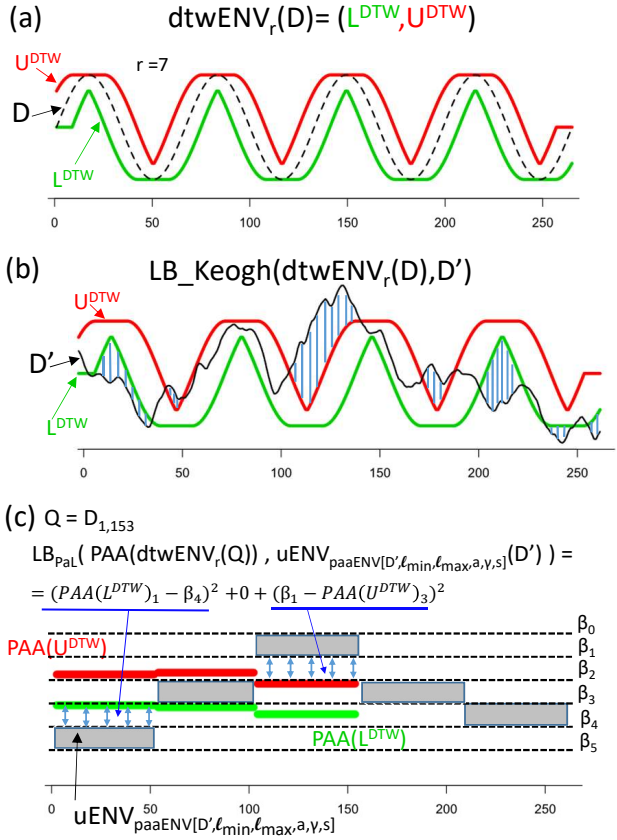}
	\caption{\textit{(a)} DTW Envelope ($L^{DTW}$, $U^{DTW}$) of a series $D$. \textit{(b)} $LB_{Keogh}$ distance between DTW Envelope and $D'$. \textit{(c)} $LB_{PaL}$ between the DTW Envelope of $Q$ (prefix of $D$) and the \textit{ULISSE} Envelope of $D'$. The horizontal dashed lines delimit the iSAX breakpoints space.}
	\label{lowerboundingDTW}
\end{figure}

We present here a lower bound for the true DTW distance between two data series. 
Keogh et al.~\cite{DBLP:journals/kais/KeoghR05} introduced the $LB_{Keogh}$ function, which provides a measure that is always smaller or equal than the true DTW, between two equi-length series.
To compute this measure, we need to account for the valid warping alignments of two data series. 
Recall that the indexes of a valid path are confined by the Sakoa-Chiba band, where they are at most $r$ points far from the diagonal (Euclidean Distance alignment). 
Given a data series $D$, we can build an envelope, $dtwENV_{r}(D)$, composed by two data series: $L^{DTW}$ and $U^{UDTW}$, which delimit the space generated by the points of $D$ that have indexes in the valid warping paths, constrained by the window $r$.
Therefore, the $i^{th}$ point of the two envelope sequences are computed as follows: $L_{i}^{DTW} = min(D_{(i-r,2r+1)})$ and $U_{i}^{DTW} = max(D_{(i-r,2r+1)})$. 
Intuitively, each $i^{th}$ value of $L^{DTW}$ and $U^{DTW}$ represent the minimum and the maximum values, respectively, of the points in $D$ that can be aligned with the $i^{th}$ position of a matching series. 
In Figure~\ref{lowerboundingDTW}(a), we report a data series $D$ (plotted using a dashed line), contoured by its $dtwENV_{r}(D)$ envelope ($r=7$).

\noindent{\bf Lower bounding DTW.} We can thus define the $LB_{Keogh}$ distance~\cite{DBLP:journals/kais/KeoghR05}, which is computed between a DTW envelope of a series $D$ and a data series $D'$, where $|D|=|D'|$ and the warping window is $r$:

{\small
	\begin{equation}
	\begin{gathered} 
	\qquad\qquad LB_{Keogh}(dtwENV_{r}(D),D') =\\
	\qquad\qquad \sqrt{\sum_{i=1}^{|D|} 
		\begin{cases}
		( D'_{i}-U_{i}^{DTW})^2,&  if  D'_{i} > U_{i}^{DTW}\\
		( D'_{i}-L_{i}^{DTW})^2,&  if  D'_{i} < L_{i}^{DTW}\\s
		0 &  \text{otherwise.}
		\end{cases}}
	\label{lbKeogh}
	\end{gathered}
	\end{equation}
}

The $LB_{Keogh}$ distance between $dtwENV_{r}(D)$ and $D'$ is guaranteed to be always smaller than, or equal to $DTW(D,D')$, computed with warping window $r$.
In Figure~\ref{lowerboundingDTW}(b), we depict the $LB_{Keogh}$ distance between $dtwENV_{r}(D)$ (from Figure~\ref{lowerboundingDTW}(a)), and a new series $D'$. 
The vertical (blue) lines represent the positive differences between $D'$ and the DTW envelope of $D$, in Equation~\ref{lbKeogh}.
Note that the computation of $LB_{Keogh}$ takes $O(\ell)$ time (linear), whereas the true DTW computation runs in $O(\ell r)$ time using dynamic programming~\cite{DBLP:journals/kais/KeoghR05,DBLP:conf/kdd/RakthanmanonCMBWZZK12}.

\noindent{\bf Lower bounding DTW in ULISSE.} We now propose a new lower bounding measure for the true DTW distance between a data series and all the sequences (of the same length) represented by an \textit{ULISSE} Envelope. 
To that extent, we first introduce a measure based on $LB_{Keogh}$ distance, which is computed between the PAA representation of $dtwENV_{r}(D)$ and the $iSAX$ representation of $D'$. 
Given $w$, the number of PAA coefficients of each dtw envelope series ($U^{DTW}$,$L^{DTW}$) that is equivalent to the number of iSAX coefficients of $D'$, we have:   

\begin{equation}
\begin{gathered} 
{\small LB_{Keogh_{PAA\_iSAX}}(PAA(dtwENV_{r}(D)),iSAX(D'))=}\\
\sqrt{\frac{|D|}{w}} \sqrt{\sum_{i=1}^{w} 
	\begin{cases}
	\scriptstyle(\beta_{\ell}(iSAX(D')_{i}) - PAA(U^{DTW})_{i} )^2,&  \scriptstyle if (*)\\
	\scriptstyle(PAA(L^{DTW})_{i} - \beta_{u}(iSAX(D')_{i}))^2,& \scriptstyle if (**)\\
	\scriptstyle 0 & {\footnotesize \text{otherwise.}}
	\end{cases}}\\
\scriptstyle(*) \beta_{\ell}(iSAX(D')_{i}) > PAA(U^{DTW})_{i} )\\
\scriptstyle(**) PAA(L^{DTW})_{i} > \beta_{u}(iSAX(D')_{i}) 
\label{lbKeoghPAAISAX}
\end{gathered}
\end{equation}

We know that $LB_{Keogh_{PAA\_iSAX}}(PAA(dtwENV_{r}(D)),$\\$iSAX(D')) \le LB_{Keogh}(dtwENV_{r}(D),D')$ as proven by Keogh et al.~\cite{DBLP:journals/kais/KeoghR05}.
Given the $PAA$ representation of $dtwENV_{r}(D)$ (of $w$ coefficients), and an \textit{ULISSE} Envelope built on $D'$: $uENV_{paaENV_{[D',\ell_{min},\ell_{max},a,\gamma,s]}})$ = $[L,U]$, we define:

\begin{equation}
\begin{gathered} 
{\small LB_{PaL}(PAA(dtwENV_{r}(D)),uENV_{paaENV[D',...]})=}\\
\sqrt{s} \sqrt{\sum_{i=1}^{w} 
	\begin{cases}
	\scriptstyle(\beta_{\ell}(iSAX(L)_{i}) -  PAA(U^{DTW})_{i} )^2,&  \scriptstyle if (*)\\
	\scriptstyle( PAA(L^{DTW})_{i} - \beta_{u}(iSAX(L)_{i}))^2,& \scriptstyle if (**)\\
	\scriptstyle 0 & {\footnotesize \text{otherwise.}}
	\end{cases}}\\
\scriptstyle(*) \beta_{\ell}(iSAX(L)_{i}) > PAA(U^{DTW})_{i} )\\
\scriptstyle(**) PAA(L^{DTW})_{i} > \beta_{u}(iSAX(U)_{i}) 
\label{PalpanasLinardiBound}
\end{gathered}
\end{equation}

\begin{lemma}
	Given two data series $D$ and $D'$, where $\ell_{min} \le |D| \le \ell_{max}$, the distance $LB_{PaL}(PAA(dtwENV_{r}(D)),$\\$uENV_{paaENV[D',\ell_{min},\ell_{max},a,\gamma,s]})$ is always smaller or equal to $DTW(D,D'_{i,|D|})$, for each $i$ such that $a \le i \le a+\gamma+1$ and $|D'|-(i-1) \ge \ell_{min}$.
\end{lemma}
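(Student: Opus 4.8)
The plan is to build a chain of inequalities that transports the classical $LB_{Keogh}$ lower bound, first from raw series to the PAA/iSAX domain, and then from a single iSAX word to the \textit{ULISSE} Envelope $[iSAX(L),iSAX(U)]$. Fix an index $i$ with $a \le i \le a+\gamma+1$ and $|D'|-(i-1) \ge \ell_{min}$, and write $D'' = D'_{i,|D|}$, which by construction is one of the subsequences represented by the Envelope (it has the right length $|D|$ and a valid offset). The starting point is Keogh et al.'s guarantee: $LB_{Keogh}(dtwENV_{r}(D),D'') \le DTW(D,D'')$, with warping window $r$. Next, I would invoke the already-stated fact that $LB_{Keogh_{PAA\_iSAX}}(PAA(dtwENV_{r}(D)),iSAX(D'')) \le LB_{Keogh}(dtwENV_{r}(D),D'')$, so that it suffices to show
\[
LB_{PaL}(PAA(dtwENV_{r}(D)),uENV_{paaENV[D',\ldots]}) \;\le\; LB_{Keogh_{PAA\_iSAX}}(PAA(dtwENV_{r}(D)),iSAX(D'')).
\]
Both quantities are square roots of scaled sums over the $w$ segments, so it is enough to compare the two expressions term by term (for each segment index $1 \le i' \le w$), after checking that the multiplicative prefactors $\sqrt{s}$ and $\sqrt{|D|/w}$ coincide on the longest prefix of $D$ that is a multiple of $s$ — this is exactly the convention under which the PAA coefficients of $Q$/$D$ are computed against the Envelope, so $|D|/w = s$ there and the prefactors match.

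For the term-by-term comparison I would argue as in the proof of Proposition on $mindist_{ULiSSE}$. The key structural fact is: since $D''$ is encoded by the Envelope, the upper breakpoint of its $i'$-th iSAX symbol satisfies $\beta_{u}(iSAX(D'')_{i'}) \le \beta_{u}(iSAX(U)_{i'})$, and symmetrically $\beta_{\ell}(iSAX(D'')_{i'}) \ge \beta_{\ell}(iSAX(L)_{i'})$ — because $iSAX(L)$ and $iSAX(U)$ are, respectively, the coordinate-wise minimum and maximum of the iSAX symbols of all represented subsequences, hence the Envelope's symbol regions span those of each individual one. Now consider the three cases of Equation~\ref{PalpanasLinardiBound}. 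If the $LB_{PaL}$ term is $0$, there is nothing to prove since the $LB_{Keogh_{PAA\_iSAX}}$ term is non-negative. If the first branch of $LB_{PaL}$ is active, i.e. $\beta_{\ell}(iSAX(L)_{i'}) > PAA(U^{DTW})_{i'}$, then $\beta_{\ell}(iSAX(D'')_{i'}) \ge \beta_{\ell}(iSAX(L)_{i'}) > PAA(U^{DTW})_{i'}$, so the corresponding branch of $LB_{Keogh_{PAA\_iSAX}}$ is also active, and since both differences are positive and $\beta_{\ell}(iSAX(D'')_{i'}) \ge \beta_{\ell}(iSAX(L)_{i'})$, we get $(\beta_{\ell}(iSAX(D'')_{i'}) - PAA(U^{DTW})_{i'})^2 \ge (\beta_{\ell}(iSAX(L)_{i'}) - PAA(U^{DTW})_{i'})^2$. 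The second branch (with $L^{DTW}$ and $\beta_{u}(\cdot)$, using $\beta_{u}(iSAX(D'')_{i'}) \le \beta_{u}(iSAX(U)_{i'})$) is handled symmetrically. Summing over $i'$ and taking square roots gives the desired inequality, and composing the whole chain yields $LB_{PaL} \le DTW(D,D'_{i,|D|})$.

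The step I expect to be the main obstacle — or at least the one requiring the most care — is the matching of the prefactors $\sqrt{s}$ versus $\sqrt{|D|/w}$ together with the handling of lengths that are not exact multiples of $s$. The Envelope is defined on the longest prefix of each master series that is a multiple of $s$, and the query's PAA is likewise truncated; I would need to state explicitly that $LB_{PaL}$ and $LB_{Keogh_{PAA\_iSAX}}$ are both evaluated on this common prefix of length $ws$, so that $|D|/w$ is to be read as $s$ and the two scalings agree, and that truncating to a prefix can only decrease the sum of squared residuals (each dropped segment contributed a non-negative term to $LB_{Keogh}$), preserving the lower-bound direction. A secondary subtlety is verifying that the branch conditions of Equation~\ref{lbKeoghPAAISAX} written with $iSAX(D'')$ are indeed implied by those of Equation~\ref{PalpanasLinardiBound} written with $iSAX(L)$/$iSAX(U)$ in every case — the monotonicity of the breakpoint inclusions above is what makes this go through, and it should be spelled out that this inclusion holds per segment regardless of how many refinement bits each segment of the Envelope carries.
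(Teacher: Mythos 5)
Your proposal is correct and follows essentially the same route as the paper's proof sketch: it reduces $LB_{PaL}$ to a per-segment comparison against $LB_{Keogh_{PAA\_iSAX}}$ evaluated on each subsequence represented by the Envelope (using the monotonicity of the Envelope breakpoints), and then invokes the known chain $LB_{Keogh_{PAA\_iSAX}} \le LB_{Keogh} \le DTW$. Your version is merely more explicit about the branch implications, the $\sqrt{s}$ versus $\sqrt{|D|/w}$ prefactor, and the truncation to the longest prefix that is a multiple of $s$ (points the paper's sketch glosses over), but no new idea is needed.
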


Intuitively, the lemma states that the $LB_{PaL}$ function always provides a measure that is smaller than the true DTW distance between $D$ and each subsequence in $D'$ of the same length, represented by $uENV_{paaENV[D',\ell_{min},\ell_{max},a,\gamma,s]})$.	

\begin{proof}(sketch):
	We want to prove that {\small $$LB_{PaL}(PAA(dtwENV_{r}(D)),uENV_{paaENV[D',\ell_{min},\ell_{max},a,\gamma,s]})$$} is equal to {\small $$\underset{i}{\operatorname{argmin}}\{LB_{Keogh_{PAA\_iSAX}}(PAA(dtwENV_{r}(D)),iSAX(D'_{i,|D|}))\},$$} 
	where $D'_{i,|D|}$ is a subsequence of $D'$ represented by $uENV_{paaENV[D',\ell_{min},\ell_{max},a,\gamma,s]})$.
	The lemma clearly holds if $LB_{PaL}$ yields zero, since the DTW distance between two series is always positive, or equal to zero. 
	We thus test the case, where Equation~\ref{PalpanasLinardiBound} provides a strictly positive value. 
	In the first case, the $i^{th}$ lower iSAX breakpoint of $L$ in the \textit{$ULISSE$} Envelope ($\beta_{\ell}(iSAX(L)_{i})$) is greater than the $i^{th}$ PAA coefficient of the $U^{DTW}$, namely $PAA(U^{DTW})_{i}$. 
	This implies that any other $i^{th}$ iSAX coefficient, which is contained in the \textit{ULISSE} Envelope is necessarily greater than $\beta_{\ell}(iSAX(L)_{i})$ and $PAA(U^{DTW})_{i}$.
	Hence, the Equation~\ref{PalpanasLinardiBound} is equivalent to the smallest value we can obtain from the first branch of $LB_{Keogh_{PAA\_iSAX}}$ computed between each $i^{th}$ iSAX coefficient of the subsequences in $D'$ (represented in the \textit{$ULISSE$} Envelope) to the $i^{th}$ PAA coefficient of $PAA(U^{DTW})$. 
	$LB_{Keogh_{PAA\_iSAX}}$ always yields a value that is smaller or equal to the true $DTW$ distance, with warping window $r$.     
	
	
	The second case is symmetric. 
	Here, the $\beta_{u}(iSAX(L)_{i})$ coefficient is the closest to $PAA(L^{DTW})_{i}$, and greater than any other $i^{th}$ iSAX coefficient of the \textit{$ULISSE$} Envelope.
	Therefore, Equation~\ref{PalpanasLinardiBound} is equivalent to the smallest value we can obtain on the second branch of $LB_{Keogh_{PAA\_iSAX}}$ computed between each $i^{th}$ iSAX coefficient of the subsequences in $D'$ (represented in the \textit{$ULISSE$} Envelope) to the $i^{th}$ coefficient of $PAA(L^{DTW})$.
	\qedsymbol
\end{proof}

In Figure~\ref{lowerboundingDTW}(c), we depict an example that shows the computation of $LB_{PaL}$ between the DTW Envelope that is built around the prefix of $D$ (\textit{153} points) and the \textit{ULISSE} Envelope of the series $D'$. For this latter, the settings are: $a=1$, $\ell_{min}=153$, $\ell_{max}=255$, $\gamma=0$ and $s=51$. 
In the figure, we represent the iSAX coefficients of the \textit{ULISSE} Envelope, with (gray) rectangles delimited by their breakpoints (dashed horizontal lines). 
The coefficients of $PAA(U^{DTW})$ and $PAA(L^{DTW})$ are represented by red and green solid segments.

\subsection{Approximate search}

Similarity search performed on \textit{ULISSE} index relies on Equation~\ref{mindistPAAuENV} (Euclidean distance) and Equation~\ref{PalpanasLinardiBound} (DTW distance) to prune the search space. 
This allows to navigate the tree, visiting the most promising nodes first.
We thus provide a fast approximate search procedure we report in Algorithm~\ref{approximateSearch}. 
In Line~\ref{internalnodepush} (or Line~\ref{internalnodepush2} if DTW distance is used), we start to push the internal nodes of the index in a priority queue, where the nodes are sorted according to their lower bounding distance to the query. 
Note that in the comparison, we use the largest prefix of the query, which is a multiple of the $PAA$ segment length, used at the index building stage (Line~\ref{largestMultiple}). 
Recall that when the search is performed using the DTW measure, the $PAA$ representation of the query is computed on the DTW envelope ($dtwENV_{r}$) of the segment-length multiple that completely contains the query (Line~\ref{largestMultipleDTW}). 
This envelope is composed by two series, which encode the possible warping alignment according the warping window $r$. 
Therefore, the PAA representation is composed by two sets of coefficients, e.g., $PAA(L^{DTW})$ and $PAA(U^{DTW})$, as we depict in Figure~\ref{lowerboundingDTW}.(c).
Then, the algorithm pops the ordered nodes from the queue, visiting their children in the loop of Line~\ref{loopSubtrees}. 
In this part, we still maintain the internal nodes ordered (Lines~\ref{keepOrdered}-\ref{keepOrdered2}). 

As soon as a leaf node is discovered (Line~\ref{iterateLeaf}), we check if its lower bound distance to the query is shorter than the \textit{bsf}. 
If this is verified, the dataset does not contain any data series that are closer than those already compared with the query. 
In this case, the approximate search result coincides with that of the exact search.
Otherwise, we can load the raw data series pointed by the Envelopes in the leaf, which are in turn sorted according to their position, to avoid random disk reads.
We visit a leaf only if it contains Envelopes that represent sequences of the same length as the query. 
Each time we compute either the true Euclidean distance (Line~\ref{EDcompt}) or the true DTW distance (Line~\ref{DTWcompt}), the best-so-far distance (\textit{bsf}) is updated, along with the $R^{a}$ vector. 
Since priority is given to the most promising nodes, we can terminate our visit, when at the end of a leaf visit the $k$ \textit{bsf}'s have not improved (Line~\ref{endVisit}). 
Hence, the vector $R^{a}$ contains the $k$ approximate query answers. 

\begin{algorithm}[!tb]
	{\scriptsize
		\SetAlgoLined
		\label{approximateSearch}
		\KwIn{\textbf{int} $k$, \textbf{float []} $Q$, \textbf{\textit{ULISSE} index} I, \textbf{int} r \tcp{warping window}}
		\KwOut{\textbf{float [$k$][$|Q|$]} $R^{a}$, \textbf{float []} \textit{bsf}}
		\scriptsize
		\textbf{float []}  $Q^*$ $\leftarrow$ $PAA(Q_{1,..,\lfloor|Q| / I.s \rfloor})$\; \label{largestMultiple}
		\textbf{float [][]}  $Q^{*dtw}$ $\leftarrow$ $PAA(dtwENV_{r}(Q_{1,..,\lfloor|Q| / I.s \rfloor}))$\; \label{largestMultipleDTW}
		\textbf{float[k]} \textit{bsf} $\leftarrow \{\infty,...,\infty\}$ \;
		\textbf{PriorityQueue} nodes\;
		
		\ForEach{node \textbf{in} I.root.children()}
		{ 
			\uIf{Euclidean distance search}
			{
				$nodes.push(node,mindist_{ULiSSE}(Q^{*},node))$;\label{internalnodepush}
			}
			\ElseIf{DTW search}
			{
				$nodes.push(node,LB_{PaL}(Q^{*dtw},node))$;\label{internalnodepush2} 
			}		
		} 
		\While{n = nodes.pop()}
		{\label{loopSubtrees}
			\uIf{n.isLeaf() \textbf{and} n.containsSize($|Q|$) }
			{	
				\uIf{n.$lowerBound<$ \textit{bsf}[k] }
				{
					\label{iterateLeaf}
					\tcp{sort according disk pos.}
					\textbf{$\mathbf{uENV}$ []} Envelopes = $sort(n.Envelopes)$\;
					\tcp{iterate the Env. and compute true ED }
					oldBSF $\leftarrow$ \textit{bsf}[k]\;
					\ForEach{E \textbf{in} Envelopes}
					{\textbf{float []} D  $\leftarrow $ readSeriesFromDisk(E)\;
						\For{i $\leftarrow$ E.a \textbf{to} \textbf{min}(E.a+E.$\gamma$+1,$|D|-(|Q|-1)$)}
						{
							\uIf{Euclidean distance search}
							{
								$ED_{updateBSF}$($Q,E.D_{i,|Q|},k,$\textit{bsf}$, R^{a}$);\label{EDcompt}
							}
							\ElseIf{DTW search}
							{
								$DTW_{updateBSF}$($Q,E.D_{i,|Q|},k,$\textit{bsf}$, R^{a}$,$r$);\label{DTWcompt}
							}

						}	
					}
					\tcp{if \textit{bsf} has not improved end visit.}
					\If{oldBSF $==$ \textit{bsf}[k]}
					{\label{endVisit}
						\textit{break}\;
					}
				}
				\Else
				{
					break; \tcp{Approximate search is exact.} 
				}
			}
			\Else
			{
				$LBleft$ $\leftarrow 0$, $LBright$ $\leftarrow 0$\;
				\uIf{Euclidean distance search}
				{
					$LBleft$ $\leftarrow$$mindist_{\textit{ULISSE}}(Q^{*},n.left)$\; 
					$LBright$ $\leftarrow$$mindist_{\textit{ULISSE}}(Q^{*},n.right)$\;
				}
				\ElseIf{DTW search}
				{
					$LBleft$$\leftarrow$$LB_{PaL}(Q^{*dtw},n.left)$\;
					$LBright$$\leftarrow$$LB_{PaL}(Q^{*dtw},n.right)$;
				}
				
				$nodes.push(n.left,LBleft)$\;\label{keepOrdered}
				$nodes.push(n.right,LBright)$\;\label{keepOrdered2}
				
			}
			
		}   
	} 
	\caption{\textit{ULISSE} k-NN-$Approx$}	
\end{algorithm}

\subsection{Exact search}
\begin{algorithm}[!tb]
	{\scriptsize
		\SetAlgoLined
		\label{exactSearch}
		\KwIn{\textbf{int} $k$, \textbf{float []} $Q$, \textbf{\textit{ULISSE} index} I, \textbf{int} r \tcp{warping window}}
		\KwOut{\textbf{float [$k$][$|Q|$]} $R$ }
		\scriptsize
		\textbf{float []}  $Q^*$ $\leftarrow$ $PAA(Q_{1,..,\lfloor|Q| / I.s \rfloor})$\; \label{largestMultiple2}
		\textbf{float [][]}  $Q^{*dtw}$ $\leftarrow$ $PAA(dtwENV_{r}(Q_{1,..,\lfloor|Q| / I.s \rfloor}))$\; \label{largestMultiple2DTW}
		\textbf{float []} \textit{bsf}, \textbf{float [$k$][$|Q|$]} $R$ $\leftarrow$ $k$-$NN$-$Approx(k,Q,I)$ \; 
		\If{\textit{bsf} is not exact}
		{
			\ForEach{E \textbf{in} $I.inMemoryList$}
			{ 
				$LBDist$$\leftarrow 0$\;
				\uIf{Euclidean distance search}
				{
					$LBDist$$\leftarrow$$mindist_{ULiSSE}(Q^{*},E)$;\label{checkLBED}
				}
				\ElseIf{DTW search}
				{
					$LBDist$$\leftarrow$$LB_{PaL}(Q^{*dtw},E)$;\label{checkLBDTW}
				}
				
				\If{ $LBDist$$<$ \textit{bsf}[k]}
				{ \label{checkHardBSF}
					\textbf{float []} D  $\leftarrow $ readSeriesFromDisk(E)\;
					\For{i $\leftarrow$ E.a \textbf{to} \textbf{min}(E.a+E.$\gamma$+1,$|D|-(|Q|-1)$)}
					{\label{iterateCandidates}
						\uIf{Euclidean distance search}
						{
							$ED_{updateBSF}(Q,D_{i,|Q|},k,$\textit{bsf}$, R)$\;
						}
						\ElseIf{DTW search}
						{
							$l \leftarrow  LB_{Keogh}(dtwENV_{r}(Q),D_{i,|Q|})$\;\label{LBKeoghComp}
							\If{$l<$\textit{bsf}[k]}
							{
								$DTW_{updateBSF}(Q,D_{i,|Q|},k,$\textit{bsf}$, R)$\;
							}
						}
					}
				}	
			}
		}
	}
	\caption{\textit{ULISSE} k-NN-$Exact$}	
\end{algorithm}

Note that the approximate search described above may not visit leaves that contain answers better than the approximate answers already identified, and therefore, it will fail to produce exact, correct results.
We now describe an exact nearest neighbor search algorithm, which finds the $k$ sequences with the absolute smallest distances to the query. 

In the context of exact search, accessing disk-resident data following the lower bounding distances order may result in several leaf visits: this process can only stop after finding a node, whose lower bounding distance is greater than the \textit{bsf}, guaranteeing the correctness of the results. 
This would penalize computational time, since performing many random disk I/O might unpredictably degenerate. 

We may avoid such a bottleneck by sorting the Envelopes, and in turn the disk accesses. 
Moreover, we can exploit the \textit{bsf} provided by approximate search, in order to perform a sequential search with pruning over the sorted Envelopes list (this list is stored across the \textit{ULISSE} index).
Intuitively, we rely on two aspects. 
First, the \textit{bsf}, which can translate into a tight-enough bound for pruning the candidate answers.
Second, since the list has no hierarchy structure, any Envelope is stored with the highest cardinality available, which guarantees a fine representation of the series, and can contribute to the pruning process. 

Algorithm~\ref{exactSearch} describes the exact search procedure. 
In the case of Euclidean distance search, in Line~\ref{checkLBED} we compute the lower bound distance between the Envelope and the query.
On the other hand, when DTW distance is used, we compute the lower bound distance in Line~\ref{checkLBDTW}.      
If it is not smaller than the $k^{th}$ \textit{bsf}, we do not access the disk, pruning Euclidean Distance computations as well.
Note that while we are computing the true distances, we can speed-up computations using the \textit{Early Abandoning} technique~\cite{DBLP:conf/kdd/RakthanmanonCMBWZZK12}, which works both for Euclidean and DTW distances.
In the case of DTW distance, prior to computing the raw distance, we have a further possibility to prune computations using the $LB_{Keogh}$ (Equation~\ref{lbKeogh}) in Line~\ref{LBKeoghComp}. 
This permits to obtain a lower bounding measure in linear time, avoiding the full DTW calculation.

\subsection{Complexity of query answering}

We provide now the time complexity analysis of query answering with \textit{ULISSE}. 
Both the approximate and exact query answering time strictly depend on data distribution as shown in~\cite{DBLP:conf/kdd/ZoumpatianosLPG15}. 
We focus on exact query answering, since approximate is part of it.

\noindent{\bf Best Case.} 
In the best case, an exact query will visit one leaf at the stage of the approximate search (Algorithm~\ref{approximateSearch}), and during the second leaf visit will fulfill the stopping criterion (i.e., the \textit{bsf} distance is smaller than the lower bounding distance between the second leaf and the query). 
Given the number of the first layer nodes (root children) $N$, the length of the first leaf path $L$, and the number of Envelopes in the leaf $S$, the best case complexity is given by the cost to iterate the first layer node and descend to the leaf keeping the nodes sorted in the heap: $O(w(N + L log L))$, where $w$ is the number of symbols checked at each lower bounding distance computation. We recall that computing the lower bound of Euclidean or DTW distance has equal time complexity. Moreover we need to take into account the additional cost of sorting the disk accesses and computing the true distances in the leaf, which is $O(S(logS+W))$ in the case of Euclidean distance, and $O(S(logS+ r W))$ for DTW distance, where $W=\ell_{min}(\ell_{max}-\ell_{min}+\gamma+1)$ represents the maximum number of points to check in each Envelope, and $r$ is the warping window length.
Note that we always perform disk accesses sequentially, avoiding random disk I/O. Each disk access in \textit{ULISSE} reads at most $\Theta(\ell_{max}+\gamma)$ points.

\noindent{\bf Worst Case.}
The worst case for exact search takes place when at the approximate search stage, the complete set of leaves that we denote with $T$, need to be visited. 
This has a cost of $O(w(N + T L log L))$ plus the cost of computing the true distances, which takes  $O(T(S(logS+W)))$ for Euclidean distance, or $O(T(SlogS+ SrW))$ for DTW distance, where (as above) $W=\ell_{min}(\ell_{max}-\ell_{min}+\gamma+1)$.
Note though that this worst case is pathological: for example, when all the series in the dataset are the same straight lines (only slightly perturbed).
Evidently, the very notion of indexing does not make sense in this case, where all the data series look the same. 
As we show in our experiments on several datasets, in practice, the approximate algorithm always visits a very small number of leaves.

\noindent{\bf \textit{ULISSE} k-NN Exact complexity.}
So far we have considered the exact k-NN search with regards to Algorithm~\ref{approximateSearch} (approximate search). 
When this algorithm produces approximate answers, providing just an upper bound \textit{bsf}, in order to compute exact answers we must run Algorithm~\ref{exactSearch} (exact search). 
The complexity of this procedure is given by the cost of iterating over the Envelopes and computing the \textit{mindist}, which takes $O(Mw)$ time, where $M$ is the total number of Envelopes in the index. 
Let's denote with $V$ the number of Envelopes, for which the raw data are retrieved from disk and checked. 
Then, the algorithm takes an additional $O(VW)$ time to compute the true Euclidean distances, or $O(VrW)$ to compute the true DTW distances, with $W=\ell_{min}(\ell_{max}-\ell_{min}+\gamma+1)$.

\noindent{\bf $\mathbf{\epsilon}$-range search adaption.}
We note that Algorithm~\ref{exactSearch} can be easily adapted to support $\epsilon$-range search, without affecting its time complexity.
In order to retrieve all answers with distance less than a given threshold $\epsilon \in \mathbb{R}$, we just need to replace the bound $bsf[k]$ with $\epsilon$, in the test of line~\ref{checkHardBSF}.
Subsequently, if the test is true, the algorithm will compute the real distances between the query and all candidates in $D$ (line~\ref{iterateCandidates}), simply filtering the subsequences with distances lower than $\epsilon$.  

\section{Experimental Evaluation}
\label{sec:experiments}

\noindent{\bf Setup.}
All the experiments presented in this section are completely reproducible: the code and datasets we used are available online~\cite{1}.
We implemented 
all algorithms (indexing and query answering) in C (compiled with gcc 4.8.2).
We ran experiments on an Intel Xeon E5-2403 (4 cores @ 1.9GHz), using the x86\_64 GNU/Linux OS environment.

\noindent{\bf Algorithms.}
We compare $\textit{ULISSE}$ to \emph{Compact Multi-Resolution Index (CMRI)~\cite{DBLP:journals/kais/KadiyalaS08}}, which is the current state-of-the-art index for similarity search with varying-length queries (recall that \emph{CMRI} constructs a limited number of distinct indexes for series of different lengths). 
We note though, that in contrast to our approach, \emph{CMRI} can only support non Z-normalized sequences.
Furthermore, we compare $\textit{ULISSE}$ to \emph{KV-Match}~\cite{KV-match}, which is the state-of-the-art indexing technique for $\epsilon$-range queries that support the Euclidean and DTW measures over non Z-normalized sequences (remember that, as we discussed in Section~\ref{sec:relatedwork}, for Z-normalized data \emph{KV-Match} only supports exact search for the constrained $\epsilon$-range queries). 
Finally, we consider Index Interpolation \emph{(IND-INT)}~\cite{DBLP:journals/datamine/LohKW04}. 
This method adapts an index based $\epsilon$-range algorithm, which supports Z-normalization to answer k-NN queries of variable length. 

In addition, we compare to the current state-of-the-art algorithms for subsequence similarity search, the \emph{UCR suite}~\cite{DBLP:conf/kdd/RakthanmanonCMBWZZK12}, and \emph{MASS}~\cite{DBLP:conf/icdm/MueenHE14}. Note that only \emph{UCR suite} works with the Euclidean and DTW measures, whereas \emph{MASS} supports only similarity search using Euclidean distance.
These algorithms do not use an index, but are based on optimized serial scans, and are natural competitors, since they can process overlapping subsequences very fast.

\noindent{\bf Datasets.}
For the experiments, we used both synthetic and real data.
We produced the synthetic datasets with a generator, where a random number is drawn from a Gaussian distribution $N(0,1)$, then at each time point a new number is drawn from this distribution and added to the value of the last number. This kind of data generation has been extensively used in the past~\cite{DBLP:conf/kdd/ZoumpatianosLPG15,DBLP:journals/vldb/ZoumpatianosLIP18}, and has been shown to effectively model real-world financial data~\cite{Faloutsos1994}.

The real datasets we used are:
\begin{itemize}
	\item (GAP), which contains the recording of the global active electric power in France for the period 2006-2008. This dataset is provided by EDF (main electricity supplier in France)~\cite{Lichman:2013};
	\item (CAP), the Cyclic Alternating Pattern dataset, which contains the EEG activity occurring during NREM sleep phase~\cite{CAP:dataset};
	\item (ECG) and (EMG) signals from Stress Recognition in Automobile Drivers~\cite{stressDriver};
	\item (ASTRO), which contains data series representing celestial objects~\cite{ltv};
	\item (SEISMIC), which contains  seismic data series, collected from the IRIS Seismic Data Access repository~\cite{SEISMIC}.
\end{itemize}

In our experiments, we test queries of lengths \textit{160-4096} points, since these cover at least $90\%$ of the ranges explored in works about data series indexing in the last two decades \cite{DBLP:journals/datamine/KeoghK03,DBLP:journals/datamine/BagnallLBLK17,DBLP:journals/datamine/WangMDTSK13}.

\subsection{Envelope Building}

In the first set of experiments, we analyze the performance of the $\textit{ULISSE}$ indexing algorithm. 
Note that the indexing algorithm is oblivious to the distance measure used at query time.

In Figure~\ref{indexingExp}(a) we report the indexing time (Envelope Building and Bulk loading operations) when varying $\gamma$. 
We use a dataset containing \textit{5M} series of length \textit{256}, fixing $\ell_{min}=160$ and $\ell_{max}=256$. 
Observe that, when $\gamma =0$, the algorithm needs to extract as many Envelopes as the number of master series of length $\ell_{min}$. 
This generates a significant overhead for the index building process (due to the maximal Envelopes generation), but also does not take into account the contiguous series of same length, in order to compute the statistics needed for Z-normalization. 
A larger $\gamma$ speeds-up the Envelope building operation by several orders of magnitude, and this is true for a very wide range of $\gamma$ values (Figure~\ref{indexingExp}(a)).
These results mean that the $uENV_{norm}$ building algorithm can achieve good performance in practice, despite its complexity that is quadratic on $\gamma$.

In Figure~\ref{indexingExp}(b) we report an experiment, where $\gamma$ is fixed, and the query length range ($\ell_{max}-\ell_{min}$) varies. 
We use a dataset, with the same size of the previous one, which contains \textit{2.5M} series of length \textit{512}.
The results show that increasing the range has a linear impact on the final running time.

\begin{figure}[tb]
	\hspace*{-0.5cm}
	\includegraphics[trim={0cm 13cm 0cm 3cm},scale=0.55]{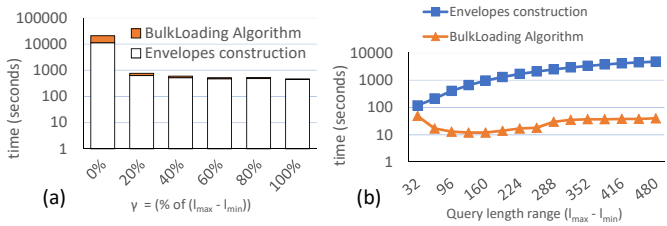}
	\caption{(a) Construction and bulk Loading time (log scale) of Envelopes in 5GB datasets varying $\gamma$ (5M of series of length 256), $\ell_{min}=160$, $\ell_{max}=256$ .
		(b) Construction and Bulk Loading time (log scale) of Envelopes in 5GB dataset (2.5M of series of length 512) varying $\ell_{max} - \ell_{min}$ (lengths range), $\gamma=256$, fixed $\ell_{max}=512$.}
	\label{indexingExp}
\end{figure}

\subsection{Exact Search Similarity Queries with Euclidean Distance}
We now test $\textit{ULISSE}$ on exact 1-Nearest Neighbor queries using Euclidean distance. 
We have repeated this experiment varying the $\textit{ULISSE}$ parameters along predefined ranges, which are (default in bold) $\gamma:[0\%, 20\%, 40\%, 60\%, 80\%, \mathbf{100\%}]$, where the percentage is referring to its maximum value, $\ell_{min}: [96, 128, \mathbf{160}, 192, 224, 256]$, $\ell_{max}: [256]$, dataset series length ($\ell_{S}$): $[\mathbf{256}, 512, 1024, 1536, 2048, 2560]$ and dataset size of $5GB$. 
Here, we use synthetic datasets containing random walk data in binary format, where a single point occupies 4 bytes. Hence, in each dataset $C$, where $|C|^{Bytes}$ denotes the corresponding size in bytes, we have a number of subsequences of length $\ell$ given by $N^{seq} = (\ell_{S} - \ell + 1) \times ((|C|^{Bytes}/4)/\ell_{S})$.
For instance, in a \textit{5GB} dataset, containing series of length $256$, we have $\sim$\textit{500 Million} subsequences of length \textit{160}.

We record the average \textit{CPU time}, \textit{query disk I/O time} (time to fetch
data from disk: Total time - CPU time), and \textit{pruning power} (percentage of the total number of Envelopes in the index that do not need to be read), of \textit{100} queries, extracted from the datasets with the addition of Gaussian noise. For each index used, the \textit{building time} and the relative \textit{size} are reported.
Note that we clear the main memory cache before answering each set of queries.
We have conducted our experiments using datasets that are both smaller and larger than the main memory.

In all experiments, we report the cumulative running time of \textit{1000} random queries for each query length.

\begin{figure}[tb]
	\includegraphics[trim={0cm 3.5cm 10cm 3cm},scale=0.55]{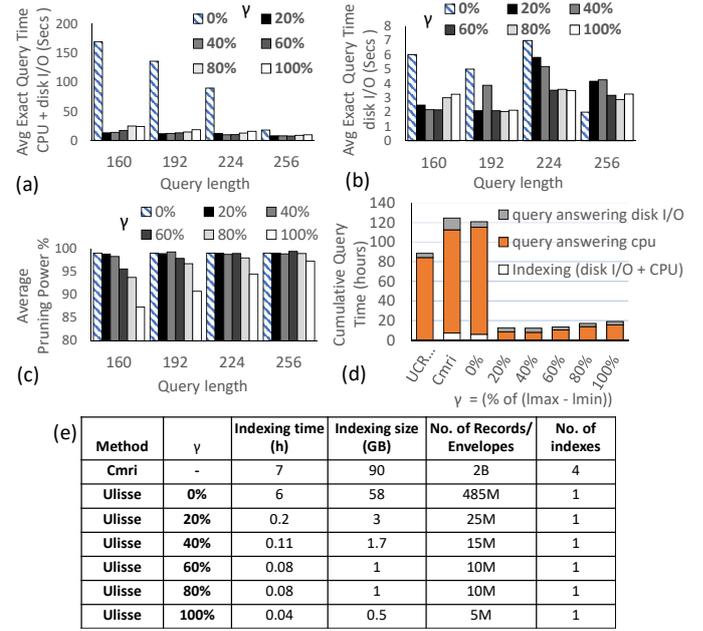}
	\caption{Query answering time performance, varying $\gamma$ on non Z-normalized data series. \textit{a)} $\textit{ULISSE}$ average query time (CPU + disk I/O). \textit{b)} $\textit{ULISSE}$ average query disk I/O time. \textit{c)} $\textit{ULISSE}$ average query pruning power. \textit{d)} Comparison of $\textit{ULISSE}$ to other techniques (cumulative indexing + query answering time). \textit{e)} Table resuming the indexes' properties.}
	\label{1Exp}
\end{figure}

\noindent{\bf Varying $\gamma$.}
We first present results for similarity search queries on $\textit{ULISSE}$ when we vary~$\gamma$, ranging from \textit{0} to its maximum value, i.e., $\ell_{max}-\ell_{min}$. 
In Figure~\ref{1Exp}, we report the results concerning non Z-normalized series (for which we can compare to \textit{CMRI}).
We observe that grouping contiguous and overlapping subsequences under the same summarization (Envelope) by increasing $\gamma$, affects positively the performance of index construction, as well as query answering (Figures~\ref{1Exp}(a) and~(d)).
The latter may seem counterintuitive, since $\gamma$ influences in a negative way pruning power, as depicted in Figure~\ref{1Exp}(c). 
Indeed, inserting more master series into a single $Envelope$ is likely to generate large containment areas, which are not tight representations of the data series.
On the other hand, it leads to an overall number of $Envelopes$ that is several orders of magnitude smaller than the one for $\gamma=0\%$. 
In this last case, when $\gamma=0$, the algorithm inserts in the index as many records as the number of master series present in the dataset (\textit{485}M), as reported in (Figure~\ref{1Exp}(e)).

We note that the disk I/O time on compact indexes is not negatively affected at the same ratio of pruning power. 
On the contrary, in certain cases it becomes faster. 
For example, the results in Figure~\ref{1Exp}(b) show that for query length 160, the $\gamma=100\%$ index is more than 2x faster in disk I/O than the $\gamma=0\%$ index, despite the fact that the latter index has an average pruning power that is $14\%$ higher (Figure~\ref{1Exp}(c)). 
This behavior is favored by disk caching, which translates to a higher hit ratio for queries with slightly larger disk load. 
We note that we repeated this experiment several times, with different sets of queries that hit different disk locations, in order to verify this specific behavior. The results showed that this disk I/O trend always holds. 

While disk I/O represents on average the $3-4$\% of the total query cost, computational time significantly affects the query performance. 
Hence, a compact index, containing a smaller number of $Envelopes$, permits a fast in memory sequential scan, performed by Algorithm~\ref{exactSearch}.

In Figure~\ref{1Exp}(d) we show the cumulative time performance (i.e., $4,000$ queries in total), comparing $\textit{ULISSE}$, \textit{CMRI}, and \textit{UCR Suite}. 
Note that in this experiment, $\textit{ULISSE}$ indexing time is negligible w.r.t. the query answering time. 
$\textit{ULISSE}$, outperforms both \textit{UCR Suite} and \textit{CMRI}, achieving a speed-up of up to $12x$.

Further analyzing the performance of \textit{CMRI}, we observe that it constructs four indexes (for four different lengths), generating more than $2B$ index records.
Consequently, it is clear that the size of these indexes will negatively affect the performance of \textit{CMRI}, even if it achieves reasonable pruning ratios.

These results suggest that the idea of generating multiple copies of an index for different lengths, is not a scalable solution.

In Figure~\ref{1Exp_normalized}, we show the results of the previous experiment, when using Z-normalization. 
We note that in this case the query answering time has an overhead generated by the Z-normalization that is performed on-the-fly, during the similarity search stage.
Overall, we observe exactly the same trend as in non Z-normalized query answering. 
$\textit{ULISSE}$ is still $2x$ faster than the state-of-the-art, namely \textit{UCR Suite}.

\begin{figure}[tb]
	\includegraphics[trim={0cm 3.5cm 10cm 3cm},scale=0.55]{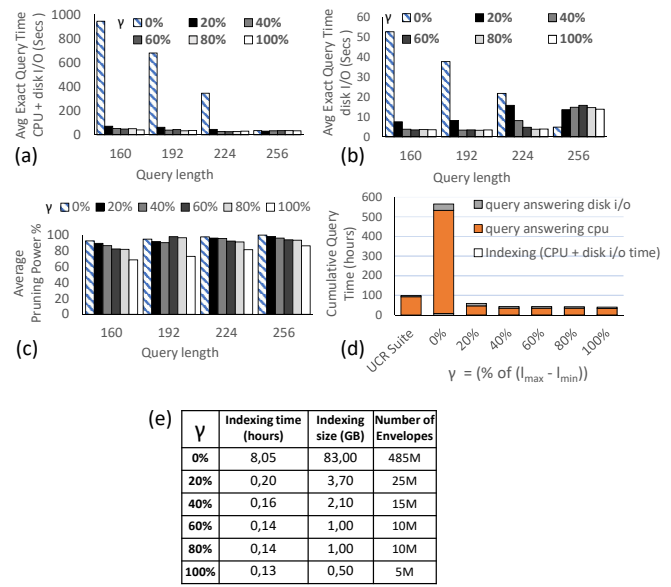}
	\vspace*{-0.4cm}
	\caption{$\textit{ULISSE}$ Indexing and exact queries performance on Z-normalized series, varying sigma. \textit{a)} Average query time (CPU + disk I/O). \textit{b)} Average query disk I/O time. \textit{c)} Average query pruning power. \textit{d)} Comparison to other techniques (cumulative indexing + query answering time). \textit{e)} Table resuming the indexes' properties.}
	\vspace*{-0.4cm}
	\label{1Exp_normalized}
\end{figure}

\begin{figure*}[tb]
	\hspace*{-1cm}
	\centering
	\includegraphics[trim={0cm 7.8cm 2cm 3cm},scale=0.60]{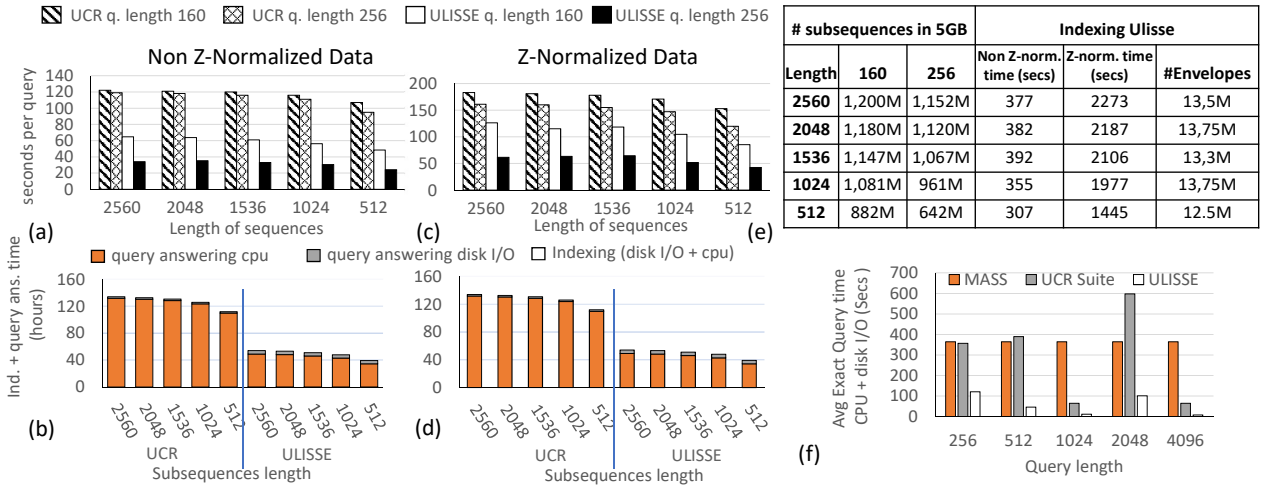}
	\caption{Query answering time performance of $\textit{ULISSE}$ and \textit{UCR Suite}, varying the data series size. Average query (CPU time + disk I/O) (\textit{a}) for non Z-normalized, (\textit{c}) for Z-normalized series). Cumulative indexing + query answering time (\textit{b}) for non Z-normalized, (\textit{d}) for Z-normalized series). \textit{e)} Table resuming the indexes' properties. \textit{f)} Comparison between MASS algorithm, \textit{UCR Suite} and $\textit{ULISSE}$.}
	\label{VarLengthSeriesDataset}
\end{figure*}

\noindent{\bf Varying Length of Data Series.}
In this part, we present the results concerning the query answering performance of $\textit{ULISSE}$ and \textit{UCR Suite}, as we vary the length of the sequences in the indexed datasets,
as well as the query length (refer to Figure~\ref{VarLengthSeriesDataset}).
In this case, varying the data series length in the collection, leads to a search space growth, in terms of overlapping subsequences, as reported in Figure~\ref{VarLengthSeriesDataset}(e). 
This certainly penalizes index creation, due to the inflated number of Envelopes that need to be generated.
On the other hand, \textit{UCR Suite} takes advantage of the high overlapping of the subsequences during the in-memory scan. 
Note that we do not report the results for $CMRI$ in this experiment, since its index building time would take up to \textit{1 day}. 
In the same amount of time, $\textit{ULISSE}$ answers more than $1,000$ queries.

Observe that in Figures~\ref{VarLengthSeriesDataset}(a) and~(c), $\textit{ULISSE}$ shows better query performance than the UCR suite, growing linearly as the search space gets exponentially larger. This demonstrates that $\textit{ULISSE}$ offers a competitive advantage in terms of pruning the search space that eclipses the pruning techniques \textit{UCR Suite}. 
The aggregated time for answering $4,000$ queries ($1,000$ for each query length) is 2x for $\textit{ULISSE}$ when compared to \textit{UCR Suite} (Figures~\ref{VarLengthSeriesDataset}(b) and~(d)).


\noindent{\bf Comparison to Serial Scan Algorithms using Euclidean Distance.} We now perform further comparisons to serial scan algorithms, namely, \emph{MASS} and \textit{UCR Suite}, with varying query lengths.

\emph{MASS}~\cite{DBLP:conf/icdm/MueenHE14} is a recent data series similarity search algorithm that
computes the distances between a Z-normalized query of length $l$ and all the Z-normalized overlapping subsequences of a single sequence of length $n \ge l$. 
\emph{MASS} works by calculating the dot products between the query and $n$ overlapping subsequences in frequency domain, in $logn$ time, which then permits to compute each Euclidean distance in constant time. 
Hence, the time complexity of \emph{MASS} is $O(nlogn)$, and is independent of the data characteristics and the length of the query ($l$).
In contrast, the \textit{UCR Suite} effectiveness of pruning computations may be significantly affected by the data characteristics.

We compared $\textit{ULISSE}$ (using the default parameters), MASS and \textit{UCR Suite} on a dataset containing \textit{5M} data series of length \textit{4096}. 
In Figure~\ref{VarLengthSeriesDataset}(f), we report the average query time (CPU + disk/io) of the three algorithms.

We note that MASS, which in some cases is outperformed by \textit{UCR Suite} and $\textit{ULISSE}$, is strongly penalized, when ran over a high number of non overlapping series. 
The reason is that, although MASS has a low time complexity of $O(nlogn)$, the Fourier transformations (computed on each subsequence) have a non negligible constant time factor that render the algorithm suitable for computations on very long series.

\noindent{\bf Varying Range of Query Lengths.} In the last experiment of this subsection, we investigate how varying the length range [$\ell_{min};\ell_{max}$] affects query answering performance. 

\begin{figure}[tb]
	\includegraphics[trim={0cm 8.5cm 10cm 3cm},scale=0.55]{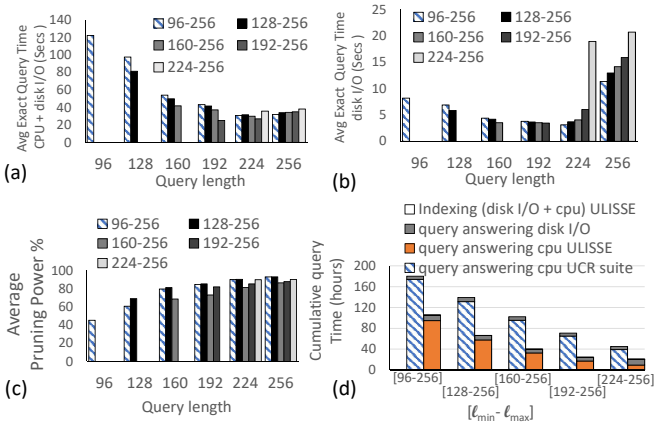}
	\caption{Query answering time, varying the range of query length on Z-normalized data series. (\textit{a)} $\textit{ULISSE}$ average query time (CPU + disk I/O). (\textit{b)} $\textit{ULISSE}$ average query disk I/O time. (\textit{c)} $\textit{ULISSE}$ average query pruning power. (\textit{d)} $\textit{ULISSE}$ comparison to other techniques  (cumulative indexing + query answering time).}
	\label{varRangeQueryNorm}
\end{figure}

In Figure~\ref{varRangeQueryNorm}, we depict the results for Z-normalized sequences. 
We observe that enlarging the range of query length, influences the number of Envelopes we need to accommodate in our index.
Moreover, a larger query length range corresponds to a higher number of Series (different normalizations), which the algorithms needs to consider for building a single Envelope (loop of line~\ref{loopNonMasterandMasterSeries} of Algorithm~\ref{algoEnvNorm}). 
This leads to large containment areas and in turn, coarse data summarizations. 
In contrast, Figure~\ref{varRangeQueryNorm}(c) indicates that pruning power slightly improves as query length range increases. 
This is justified by the higher number of Envelopes generated, when the query length range gets larger. 
Hence, there is an increased probability to save disk accesses. 
In Figure~\ref{varRangeQueryNorm}(a) we show the average query time (CPU + disk I/O) on each index, observing that this latter is not significantly affected by the variations in the length range.
The same is true when considering only the average query disk I/O time (Figure~\ref{varRangeQueryNorm}(b)), which accounts for $3-4$\% of the total query cost. We note that the cost remains stable as the query range increases, when the query length varies between \textit{96}-\textit{192}. For queries of length \textit{224} and \textit{256}, when the range is the smallest possible the disk I/O time increases.
This is due to the high pruning power, which translates into a higher rate of cache misses.  
In Figure~\ref{varRangeQueryNorm}(d), the aggregated time comparison shows $\textit{ULISSE}$ achieving an up to $2x$ speed-up over \textit{UCR Suite}.

\begin{figure}[tb]
	\includegraphics[trim={0cm 8.5cm 10cm 3cm},scale=0.55]{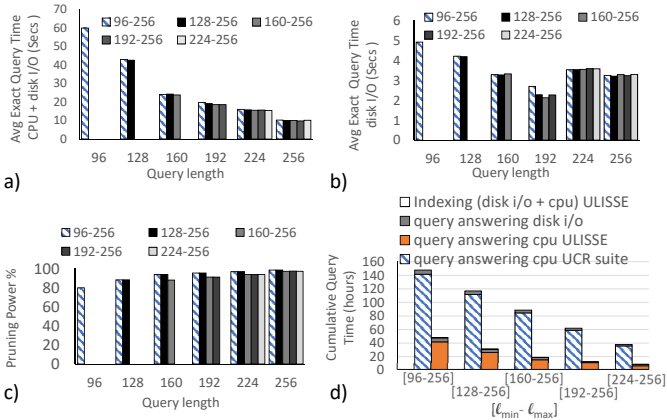}
	\caption{Query answering time, varying the range of query length on non Z-normalized data series. (\textit{a)} $\textit{ULISSE}$ average query time (CPU + disk I/O). (\textit{b)} $\textit{ULISSE}$ average query disk I/O time. (\textit{c)} $\textit{ULISSE}$ average query pruning power. (\textit{d)} $\textit{ULISSE}$ comparison to other techniques  (cumulative indexing + query answering time).}
	\label{varRangeQuery}
\end{figure}

In Figure~\ref{varRangeQuery} we present the results for non Z-normalized sequences, where the same observations hold.
Moreover, as we previously mentioned, when Z-normalization is not applied the pruning power slightly increases. 
This leads ULISSE to a performance up to $3x$ faster than \textit{UCR Suite}.

\subsection{Approximate Search Queries}

\begin{figure}[tb]
	\centering
	\includegraphics[trim={0cm 13cm 11cm 3cm},scale=0.57]{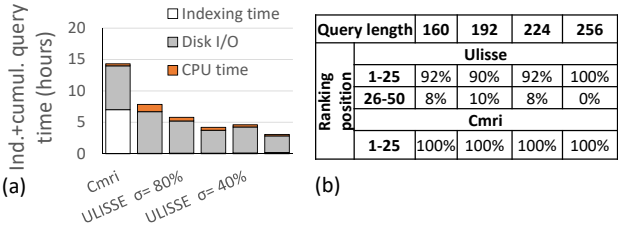}
	\caption{Approximate query answering on non Z-normalized data series. (\textit{a)} Cumulative Indexing + approximate search query time (CPU + disk I/O) of $4,000$ queries ($1,000$ per each query length in [160,192,224,256]). (\textit{b)} Approximate quality: percentage of answers in the relative exact search range.}
	\label{cmriUlisseApprox}
\end{figure}

\noindent{ \bf Approximate Search with Euclidean Distance.} In this part, we evaluate $\textit{ULISSE}$ approximate search. 
Since we compare our approach to CMRI, Z-normalization is not applied. 
Figure~\ref{cmriUlisseApprox}(a) depicts the cumulative query answering time for $4,000$ queries. 
As previously, we note that the indexing time for $\textit{ULISSE}$ is relatively very small. 
On the other hand, the time that CMRI needs for indexing is 2x more than the time during which $\textit{ULISSE}$s has finished indexing and answering $4,000$ queries.

In Figure~\ref{cmriUlisseApprox}(b), we measure the quality of the Approximate search. 
In order to do this, we consider the exact query results ranking, showing how the approximate answers are distributed along this rank, which represents the ground truth. 
We note that CMRI answers have better positions than the $\textit{ULISSE}$ ones. 
This happens thanks to the tighter representation generated by the complete sliding window extraction of each subsequence, employed by CMRI. 
Nevertheless, this small penalty in precision is balanced out by the considerable time performance gains: $\textit{ULISSE}$ is up to 15x faster than CMRI. 
When we use a smaller $\gamma$, (e.g., $20$), $\textit{ULISSE}$ shows its best time performance.
This is due to tighter $Envelopes$ containment area, which permits to find a better best-so-far with a shorter tree index visit.


\begin{figure}[tb]
	\centering
	\includegraphics[trim={0cm 3cm 4cm 3cm},scale=0.50]{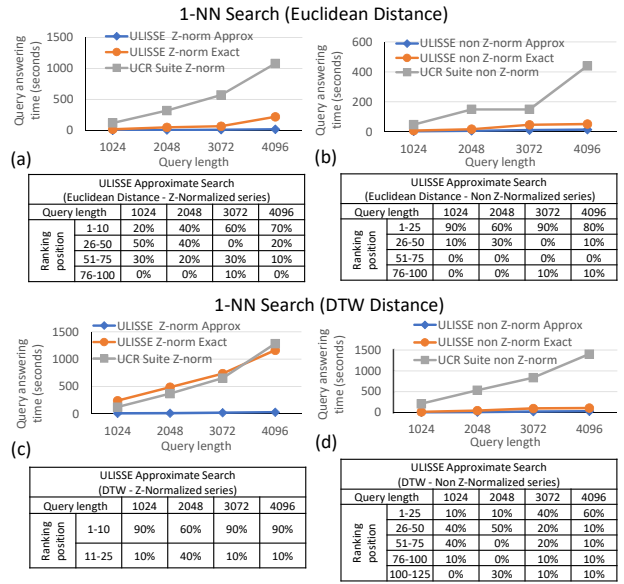}
	\caption{ Average query answering and approximate quality varying query length. (\textit{a)} Z-normalized search with Euclidean distance.
		(\textit{b)} Non Z-normalized search with Euclidean distance.
		(\textit{c)} Z-normalized search with DTW measure.
		(\textit{d)} Non Z-normalized search with DTW measure. }
	\label{UlisseEDDTW_LongApprox}
\end{figure}

\noindent{ \bf Approximate Search with DTW.} Here we evaluate, the time performance of query answering, along with the quality of approximate search.
We test the search using both the Euclidean and DTW measures, on a synthetic series composed of $100M$ points.
We test a query length range between $\ell_{min} = 1024$ and $\ell_{max} = 4096$. 
The other parameters are set to their default value.

In Figures~\ref{UlisseEDDTW_LongApprox}(a) and~(b), we report the average query answering time for the Z-normalized and non Z-normalized cases, respectively.
The results show that  $\textit{ULISSE}$ answers queries up to one order of magnitude faster than \textit{UCR Suite}. 
Furthermore, we note that $\textit{ULISSE}$ scales better as the query length increases.
This shows that our pruning strategy over summarized data, as well as having a good \emph{bsf} approximate answer early on, represent a concrete advantage when pruning the search space.

In Figures~\ref{UlisseEDDTW_LongApprox}(c) and~(d), we report the time performance of query answering with the DTW measure, considering both Z-normalized and non Z-normalized search.
In Figure~\ref{UlisseEDDTW_LongApprox}(c), we observe that $\textit{ULISSE}$ answers queries slightly slower than \textit{UCR Suite}, for three of the query lengths.
This behavior is explained by the fact that the (overlapping) subsequences represented by the Envelopes have a total size $\sim43$x bigger than the original data points.
In this case, the pruning power does not mitigate this disadvantage. 

Overall, the results show that \textit{ULISSE} is a scalable solution.
Moreover, the approximate search, which in this experiment does not visit more than $5$ leaves in the tree, represents a very fast solution, approximating well the exact answer (refer to the tables below each plot of Figure~\ref{UlisseEDDTW_LongApprox}). 
We observed the same trend of visited leaves in all the experiments presented in this work. 
This means that in practice the time complexity of the Approximate search is very close to its best case having a constant query answering complexity.


\subsection{Experiments with Real Datasets}

In this part, we discuss the results of indexing and query answering performed on real datasets. 
Here, we also consider the use of the Dynamic Time Warping (DTW) distance measure, along with Euclidean distance.    

We start the evaluation by considering five different real datasets that fit the main memory. 
In the next sections, we will additionally consider real data series collections that do not fit in the available main memory.
The objective of this experiment is to firstly assess the benefit of maximizing the number of subsequences represented by a $\textit{ULISSE}$ Envelope on query answering time. 
Moreover, we want to analyze the impact of the DTW measure on query time performance.


\noindent{\bf Indexing.} 
For this experiment, we used five real dataset, where each one contains \textit{500K} data series of length 256 (ASTRO, EMG, EEG, ECG, GAP). 
We show in Figure~\ref{indexing_ED_real}.(a,b) the indexing time performance, varying $\gamma$ for both non Z-Normalized and Normalized sequences. 
Recall that $\gamma$ is expressed as the percentage of the maximum number of master series that is $\ell_{max}-\ell_{min}$. 
The results confirm the trend depicted in Figure~\ref{indexingExp}, where the time of building $\textit{ULISSE}$ Envelopes that contain all the master series of each series is one order of magnitude smaller than the time of building the most compact Envelopes, obtained with $\gamma=5\%$. We also note that the overhead generated by the Z-normalization operations, which have an additional $\gamma$ factor in the time complexity of the indexing algorithm, is amortized by the generation of $\sim20$x less Envelopes in the index, as depicted in Figure~\ref{indexing_ED_real}(c).

\begin{figure}[tb]
	\centering
	\includegraphics[trim={1cm 13,5cm 13cm 3cm},scale=0.53]{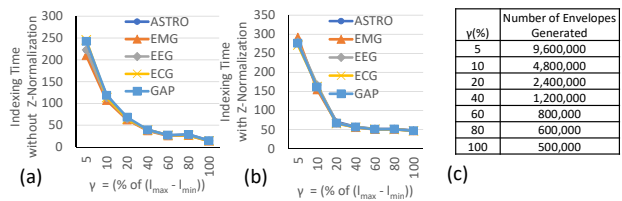}
	\caption{Indexing time of five real datasets (ASTRO, EMG, EEG, ECG, GAP) varying the number of master series in the Envelope ($\gamma$). The datasets contain $500K$ data series of length $256$, whereas $\ell_{min}=160$ and $\ell_{max}=256$. \textit{(a)} Indexing of non Z-Normalized series. \textit{(b)} Indexing time of Z-Normalized series. }
	\label{indexing_ED_real}
\end{figure}

\noindent{\bf Query Answering with Euclidean Distance.} 
We report in Figure~\ref{ED_real} the results obtained for \emph{1-NN} search over Z-normalized sequences, with Euclidean distance. 
All parameters are set to their default values. 
Therefore, in these experiments we used queries of length between $\ell_{min}=160$ and $\ell_{max}=256$; the series in the datasets have length $256$.
In Figure~\ref{ED_real}(a), we report the query pruning power as the number of master series ($\gamma$) in each Envelope varies.
As expected, we can prune less candidates when the Envelopes contain more sequences. 
Recall that when a candidate (subsequence) is pruned, the search does not consider its raw values, thus avoiding both Z-normalization and Euclidean distance computations. 
If a candidate is not pruned, the search can abandon the computations earlier, when the running Euclidean Distance is greater than the $k^{th}$ \textit{bsf} distance.

In Figure~\ref{ED_real}(b), we report the average \emph{abandoning power}, which measures the percentage of the total number of real Euclidean distance computations that are \emph{not} performed. 
When the search processes an increased number of overlapping subsequences, we expect a decrease in the number of computations performed. 
We note that the search avoids computations when the Envelopes contain a large number of subsequences, namely, as $\gamma$ increases. 

In Figure~\ref{ED_real}(c), we report the average query time varying $\gamma$. 
We obtain the highest speed-up, with the most compact index (largest $\gamma$ value), which is more that $2$x faster than the state-of-the-art (\textit{UCR Suite} algorithm). 
This confirms the trend we observed in the previous results conducted over synthetic data. 
We report the average query time for each dataset in Figure~\ref{ED_real}(d), and for each query length in Figure~\ref{ED_real}(e).
In Figure~\ref{ED_real}(f), we show the average number of Euclidean distance and lower bound computations performed by $\textit{ULISSE}$ ($\gamma=100\%$) and \textit{UCR Suite}, as the query length varies (this corresponds to the average number of points on which the distance to the query is computed), as well as the number of points that are loaded from disk and Z-normalized (this corresponds to the overhead generated by the Z-normalization operations).
The goal of this experiment is to quantify the overall benefit of $\textit{ULISSE}$ pruning and abandoning power.
(Recall that \textit{UCR Suite} does not perform any lower bound distance computations when using the Euclidean distance.)

First, we observe that $\textit{ULISSE}$ performs half of the Euclidean distance computations of \textit{UCR Suite}, and considers up to seven time less points for the Z-normalization phase.
Furthermore, we note that the computation of lower bound distances has a negligible impact on the query workload, especially when the query length is smaller than the length of the series in the dataset ($256$), in which case the number of candidate subsequences can be orders of magnitude more. 

\begin{figure}[tb]
	\centering
	\includegraphics[trim={0cm 4cm 2cm 3cm},scale=0.55]{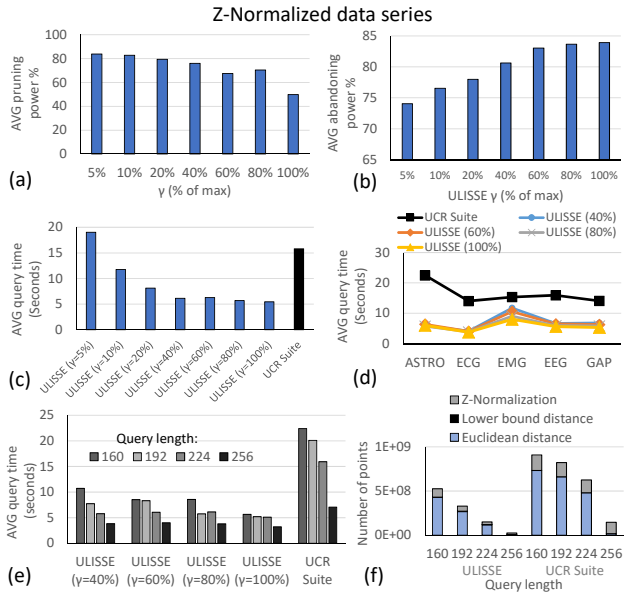}
	\caption{Exact (Z-normalized) query answering with Euclidean distance on real datasets.
		\textit{(a)} Average Pruning power \textit{(b)} Average Abandoning power \textit{(c)} Average query answering time \textit{(d)} Average query answering time for each dataset \textit{(e)} Average query answering time for each query length \textit{(f)} Average query workload (number of points, with $\gamma=100\%$)}
	\label{ED_real}
\end{figure}

In Figure~\ref{ED_real_nonNorm}, we depict the results of query answering, without the use of Z-normalization.
In this case, the results exhibit a small difference in terms of absolute pruning power values, which is higher when the search is performed on absolute series values.
The average query answering time maintains the same trend we observe in Z-normalized query answering. 
On average, $\textit{ULISSE}$ has a $3$x speed-up factor when compared to \textit{UCR Suite}.

\begin{figure}[tb]
	\centering
	\includegraphics[trim={0cm 4cm 2cm 3cm},scale=0.55]{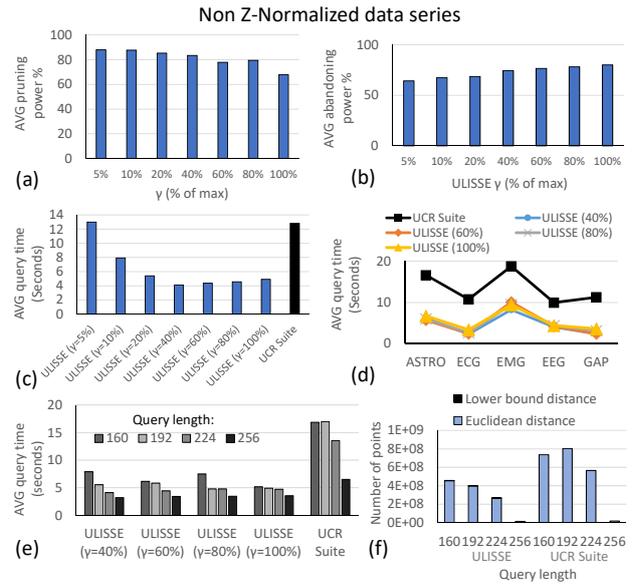}
	\caption{Exact (non Z-normalized) query answering with Euclidean distance on real datasets.
		\textit{(a)} Average Pruning power \textit{(b)} Average Abandoning power \textit{(c)} Average query answering time \textit{(d)} Average query answering time for each dataset \textit{(e)} Average query answering time for each query length \textit{(f)} Average query workload (number of points, with $\gamma=100\%$)}
	\label{ED_real_nonNorm}
\end{figure}

\noindent{\bf Query Answering with DTW Distance.} 
We now report the results of query answering using the DTW measure (Figure~\ref{DTW_real}). 
For this experiment, we used the default parameter settings, and the same real datasets considered in the previous two experiments.
We study the efficiency of query answering (\emph{1-NN} query), which uses the DTW lower bounding measures to prune the search space.

In Figure~\ref{DTW_real}(a) we report the average pruning power, when varying the DTW warping windows from $5\%$ to $15\%$ of the subsequence length.
(These values for the warping window have commonly been used in the literature~\cite{DBLP:journals/kais/KeoghR05}.)
We vary $\gamma$ between $60\%$ and $100\%$ of its maximum value, which give the best running time in this experiment. 
To avoid an unnecessary overload in the plot, we omit the results for $\gamma$ smaller than $60\%$.

Once again, we note that the pruning power is negatively affected by the size of the Envelope ($\gamma$), and under DTW search the abandoning power slightly decreases as the $gamma$ and the $warping$ window get larger (see Figure~\ref{DTW_real}(b)). 
This suggests that the DTW lower bound measure we propose is more sensitive than the one used for Euclidean Distance.
Nevertheless, in the worst case $\textit{ULISSE}$ is still able to prune $20\%$ of the candidates, and to abandon more than $80\%$ of the $DTW$ computations on raw values.

In Figure~\ref{DTW_real}(c) we report the average query answering time varying $\gamma$, and in Figures~\ref{DTW_real}(d) and~(e) the average time for each dataset and for different query lengths, respectively, for $\gamma=100\%$.
For these last two experiments, we observe no significant difference for the other values of $\gamma$ we tested.

We first note that, despite the loss of pruning power of ULISSE when increasing $\gamma$, the query answering time is not significantly affected (refer to Figures~\ref{DTW_real}(c) and~(e)). 
As in the case of Euclidean distance search, the compactness of the $\textit{ULISSE}$ index plays a fundamental role in determining the query time performance, along with the pruning and abandoning power.  

In Figure~\ref{DTW_real}(d), we note that only in the ECG and GAP datasets, enlarging the warping window has a substantial negative effect on query time ($2x$ slower), whereas in the other datasets, and in the worst case the time loss is equivalent to $10\%$.          

In Figure~\ref{DTW_real}(e), we report the average query workload of $\textit{ULISSE}$ and \textit{UCR Suite}. 
In contrast to Euclidean distance queries, we notice that the largest amount of work corresponds to lower bounding distance computations.
Recall that $\textit{ULISSE}$ prunes the search space in two stages: first comparing the query and the data in their summarized versions using $LB_{PaL}$ (Equation~\ref{PalpanasLinardiBound}), and then computing in linear time the $LB_{Keogh}$ between the query and the non pruned candidates.
In the worst case, the DTW distance point-wise computation are 10\% of those performed for calculating the Lower Bound (query length $160$).
In general, the total number of points considered for the whole workload is up to $5$x smaller than for \textit{UCR Suite}. 
We note that the pruning strategy of \textit{UCR Suite} is still very competitive, since it avoids a high number of true distance computations using the $LB_{Keogh}$ lower bound.
Nonetheless, it has to compute the lower bound distance on the entire set of candidates.
The pruning strategy implemented in $\textit{ULISSE}$ permits to achieve up to $10x$ speedup over \textit{UCR Suite}.  

\begin{figure}[tb]
	\centering
	\includegraphics[trim={0cm 3cm 2cm 3cm},scale=0.50]{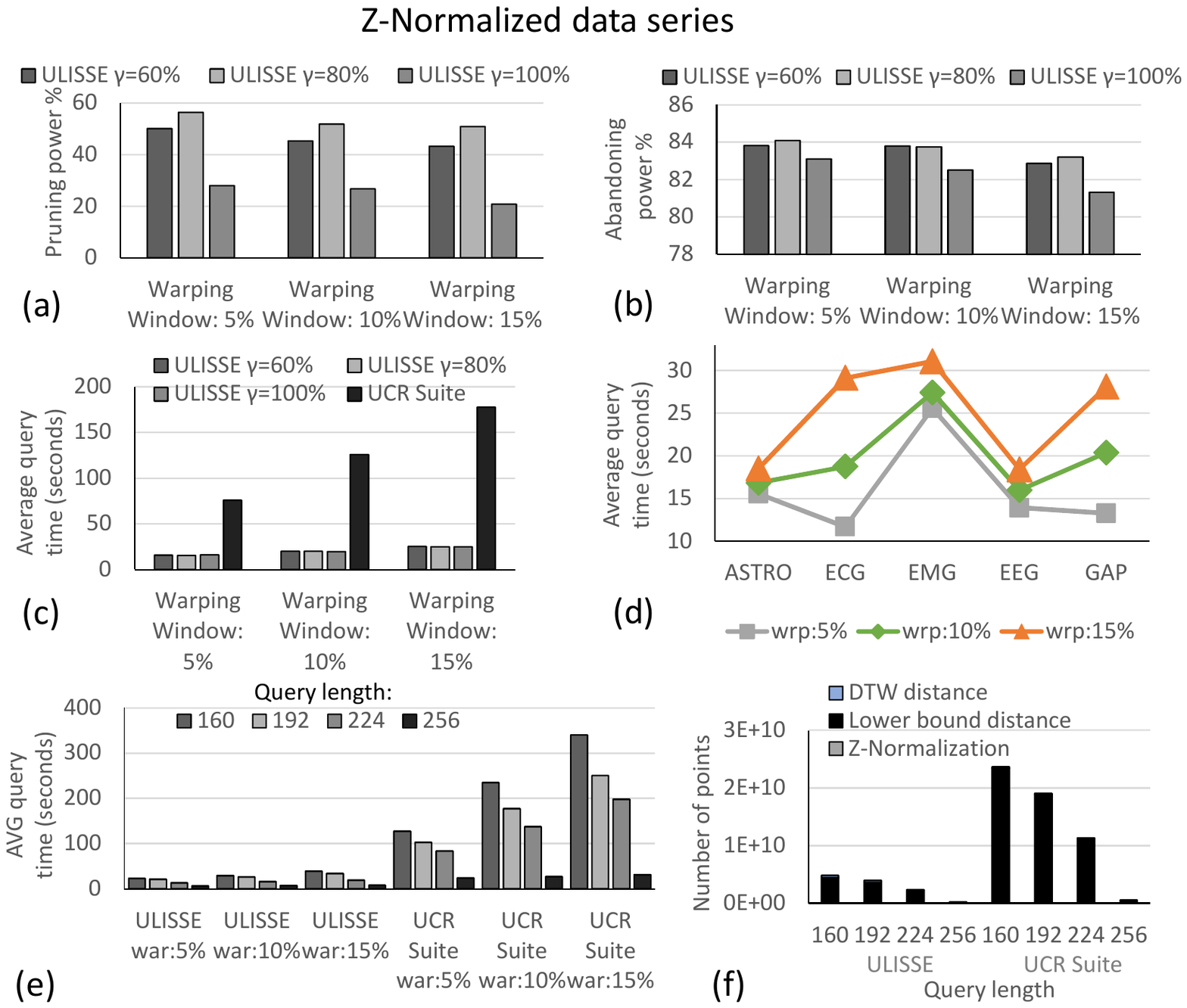}
	\caption{Exact (Z-normalized) query answering with DTW measure on real datasets.
		\textit{(a)} Average Pruning power (varying the warping window) \textit{(b)} Average Abandoning power (varying the warping window) \textit{(c)} Average query answering time \textit{(d)} Average query answering time for each dataset \textit{(e)} Average query answering time for each query length \textit{(f)} Average query workload (number of points, with $\gamma=100\%$)}
	\label{DTW_real}
\end{figure}

In Figure~\ref{DTW_real_nonNorm}, we report the results of DTW search, without the application of Z-Normalization.
Also in this case, we note that the average pruning power of $\textit{ULISSE}$ is higher than the one we previously observed in the Z-normalized search (Figure~\ref{DTW_real_nonNorm}(a)). 
On the other hand, the average abandoning power is less effective, as shown in Figure~\ref{DTW_real_nonNorm}(b).
As a consequence, we can see that the $\textit{ULISSE}$ search performs more DTW distance computations (refer to Figure~\ref{DTW_real_nonNorm}(c)). 
Nevertheless, Figure~\ref{DTW_real_nonNorm}(e) shows that on average $\textit{ULISSE}$ is up to $10$x faster than \textit{UCR Suite}, for all query lengths we tested.

\begin{figure}[tb]
	\centering
	\includegraphics[trim={0cm 3.5cm 2cm 3cm},scale=0.50]{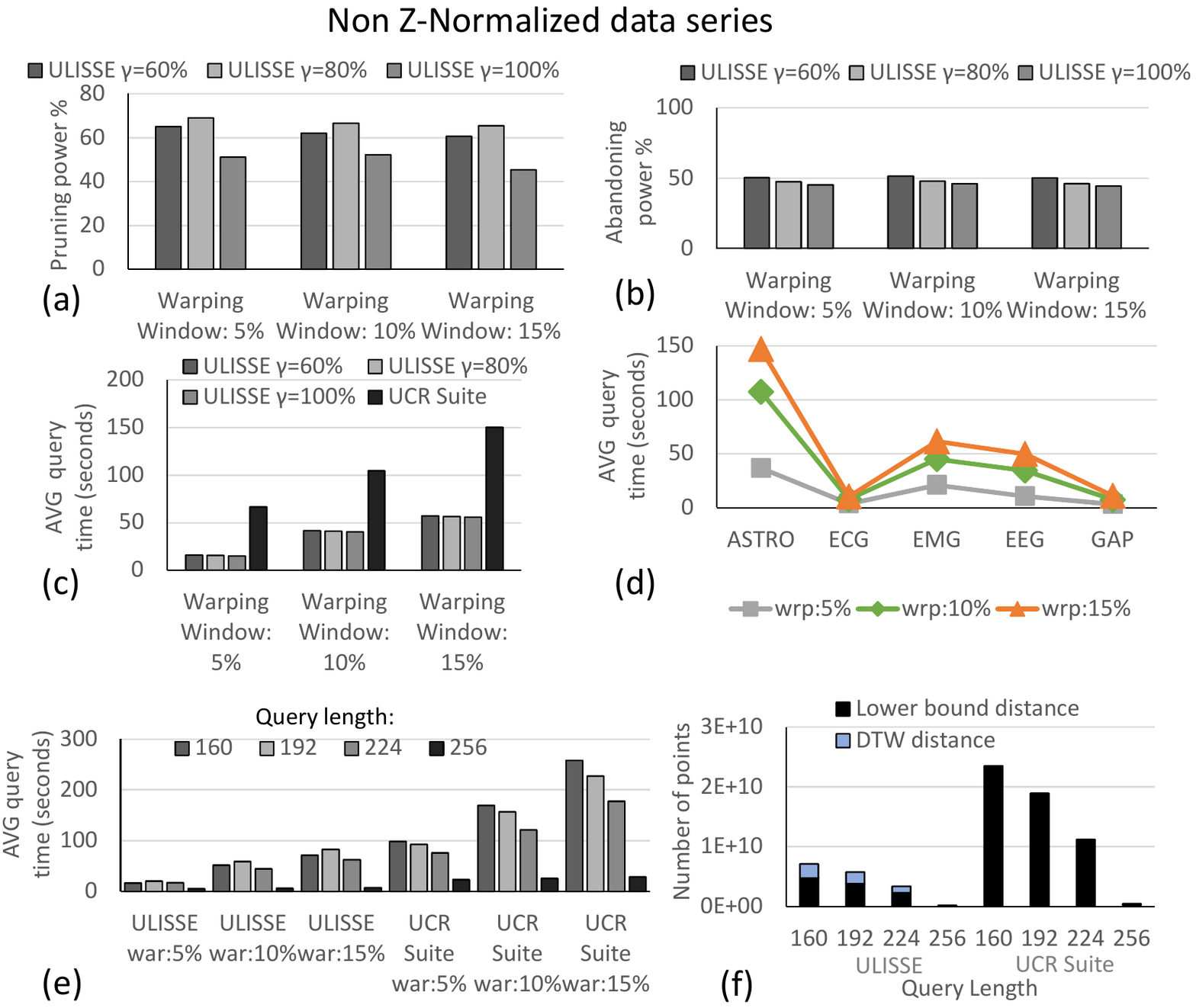}
	\caption{Exact (Z-normalized) query answering with DTW measure on real datasets.
		\textit{(a)} Average Pruning power (varying the warping window) \textit{(b)} Average Abandoning power (varying the warping window) \textit{(c)} Average query answering time \textit{(d)} Average query answering time for each dataset \textit{(e)} Average query answering time for each query length \textit{(f)} Average query workload (number of points, with $\gamma=100\%$)}
	\label{DTW_real_nonNorm}
\end{figure}

\noindent{\bf Query over Large datasets with Euclidean Distance.} 
Here, we test $\textit{ULISSE}$ on three large synthetic datasets of sizes \textit{100}GB, \textit{500}GB, and \textit{750}GB, as well as on two real series collections, i.e., ASTRO and SEISMIC (described earlier). The other parameters are the default ones.
For each generated index and for the \textit{UCR Suite}, we ran a set of 100 queries, for which we report the average exact search time.

\begin{figure*}[tb]
	\centering
	\includegraphics[trim={1cm 10cm 2cm 3cm},scale=0.63]{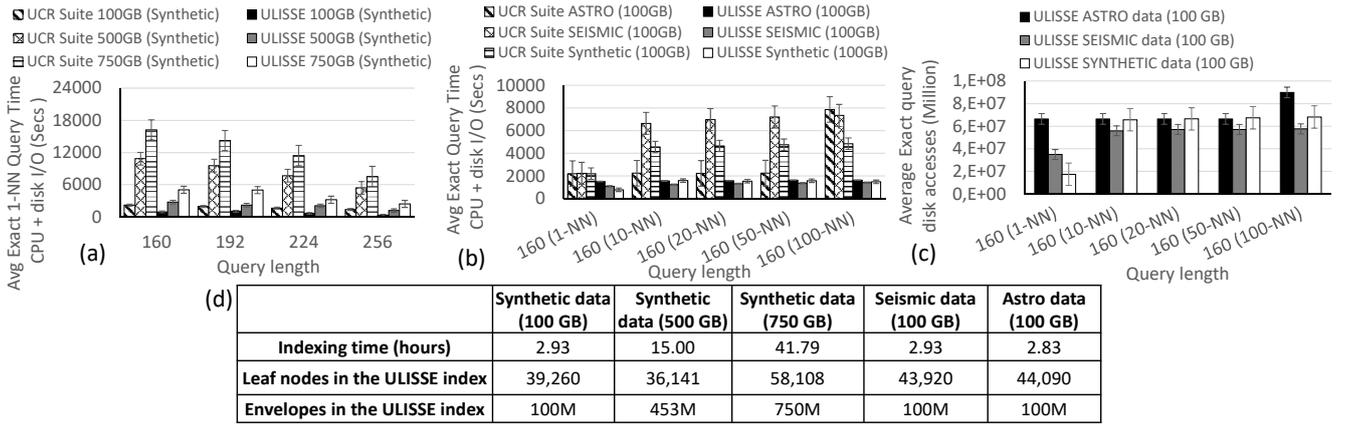}
	\caption{Exact and Approximate similarity search on Z-normalized synthetic and real datasets.
		\textit{a)} Average exact query time (CPU + disk I/O) on synthetic datasets. \textit{b)} Average exact \emph{k-NN} query time (CPU + disk I/O) on real datasets (100 GB) varying $k$. \textit{c)}  Average disk accesses of \emph{k-NN} query. \textit{d)} Indexing measures for all datasets.}
	\label{experimentsBig}
\end{figure*}

In Figure~\ref{experimentsBig}(a) we report the average query answering time (\emph{1-NN}) on synthetic datasets, varying the query length. These results demonstrate that $\textit{ULISSE}$ scales better than \textit{UCR Suite} across all query lengths, being up to 5x faster.

In Figure~\ref{experimentsBig}(b), we report the \emph{k-NN} exact search time performance, varying $k$ and picking the smallest query length, namely \textit{160}. 
Note that, this is the largest search space we consider in these datasets, since each query has \textit{9.7} billion of possible candidates (subsequences of length \textit{160}).
The experimental results on real datasets confirm the superiority of $\textit{ULISSE}$, which scales with stable performance, also when increasing the number $k$ of nearest neighbors. 
Once again it is up to 5x faster than \textit{UCR Suite}, whose performance deteriorates as $k$ gets larger. 

In Figure~\ref{experimentsBig}(c) we report the number of disk accesses of the queries considered in Figure~\ref{experimentsBig}(b). Here, we are counting the number of times that we follow a pointer from an envelope to the raw data on disk, during the sequential scan in Algorithm~\ref{exactSearch}.
Note that the number of disk accesses is bounded by the total number of Envelopes, which are reported in Figure~\ref{experimentsBig}(d) (along with the number of leaves and the building time for each index).

We observe that in the worst case, which takes place for the ASTRO dataset for $k=100$, we retrieve from disk $\sim$\textit{82}\% of the total number of subsequences. This still guarantees a remarkable speed-up over \textit{UCR Suite}, which needs to consider all the raw series.

Moreover, since $\textit{ULISSE}$ can use Early Abandoning during exact query answering, we observe during our empirical evaluation that disposing of the approximate answer distance prior the start of the exact search, permits to abandon on average \textit{20\%} of points more than \textit{UCR Suite} for the same query. 

\noindent{\bf Query over Large datasets with DTW.}
We conclude this part of the evaluation reporting the results of query answering on large datasets using the DTW distance.

In Figure~\ref{experimentsBigdtw}, we report the time performance of (\emph{1-NN} search) on the ASTRO, SEISMIC and synthetic datasets, each one containing $100M$ data series of length $256$ ($100$GB).
Also in this case, $\textit{ULISSE}$ guarantees a consistent speed-up over \textit{UCR Suite}, which is at least $\sim1.5$x faster in the worst case (ASTRO dataset, query length $160$), and up to one order of magnitude faster (synthetic dataset, query length $256$).

\begin{figure}[tb]
	\centering
	\includegraphics[trim={0cm 12cm 15cm 3cm},scale=0.80]{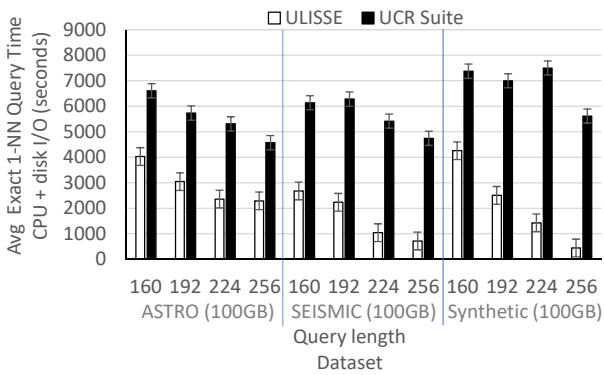}
	\caption{Average exact query time with DTW distance (CPU + disk I/O) on real and synthetic datasets.}
	\label{experimentsBigdtw}
\end{figure}

\subsection{\textit{ULISSE} vs Index Interpolation}

\begin{figure*}[tb]
	\centering
	\hspace*{-0.4cm}
	\includegraphics[trim={0cm 3cm 2cm 3cm},scale=0.65]{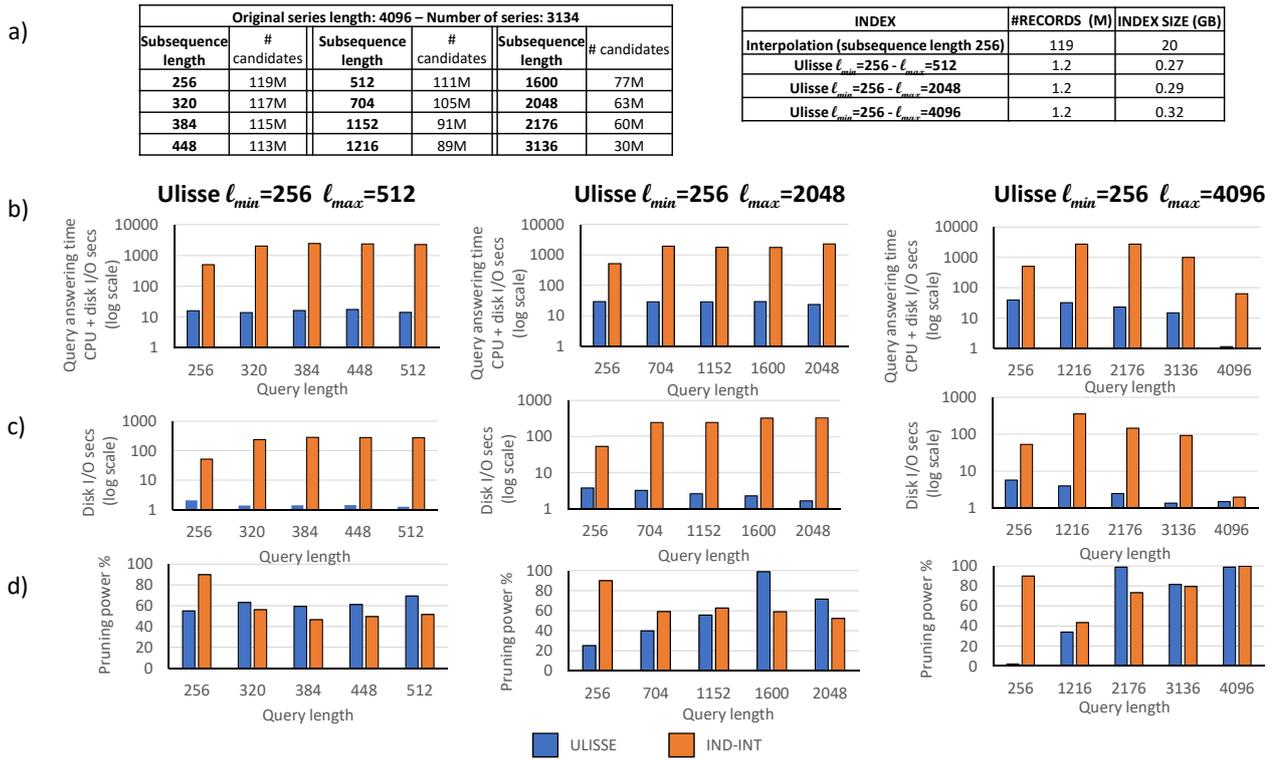}
	\caption{Result of $k$-NN search comparison between \textit{ULISSE} and \textit{IND-INT}. (\textit{a}) Data and Indices properties. (\textit{b}) Query answering time. (\textit{c}) Disk I/O. (\textit{d}) Pruning power \%.}
	\label{interUlisse}
\end{figure*}

In this section, we compare ULISSE to the Index Interpolation (\textit{IND-INT}) method.
\textit{IND-INT} works by means of $\epsilon$-range query answering of fixed length that serves the answering of variable length $k$-NN queries.
We consider the real datasets previously introduced and each query is answered using $k=1$ (Nearest Neighbor).
We set the other parameters to their default values.
The data series collections contain raw data series of fixed length $4096$. 
Hence, the number of candidates changes according to the (variable) length of the query subsequence.
We considered queries of lengths between $256$-$4096$.
In the left part of Figure~\ref{interUlisse}(a), we report the total number of candidate answers for each of these lengths.

In order to test the scalability of the approaches as a function of the query length range, we test three different ranges: ($256$-$512$), ($256$-$2048$), and ($256$-$4096$).
\textit{IND-INT} must build an index using the smallest query length ($256$), extracting one record for each candidate. 
We thus have the same index for all three ranges.
In contrast, for each one of the above three query length ranges, we can build a different \textit{ULISSE} index, whose sizes are reported in the right part of Figure~\ref{interUlisse}(a).
Observe that all three \textit{ULISSE} indexes have the same number of records (Envelopes), and that their sizes are two orders of magnitude smaller than the \textit{IND-INT} index.

In Figure~\ref{interUlisse}(b), we report the total query answering time (CPU + disk I/O; y-axis in log scale), as we vary the query length in the three chosen ranges.
\textit{ULISSE} answers $1$-NN queries more than $10x$ faster than \textit{IND-INT} in all these cases.
We observe the same in Figure~\ref{interUlisse}(c), where we report only the disk I/O time. 
Recall that \textit{ULISSE} starts by performing an approximate search that visits first the most promising nodes of the index. 
In contrast, \textit{IND-INT} issues an $\epsilon$-range query, and thus must probe each summarized record in the index, 
and then access the disk, when the lower bounding distance between the query and the record is smaller than $\epsilon$. 
This operation translates to significant time cost.

Note that in this set of experiments, we use for \textit{IND-INT} an $\epsilon$ equal to the distance between the query and its approximate answer, which we obtain by first running the query using \textit{ULISSE}.
This method for choosing $\epsilon$ (proposed in the \emph{RangeTopK} algorithm~\cite{DBLP:conf/vldb/HanLMJ07}) favors \textit{IND-INT}, since it provides a good initial value.

Our experiments show that \textit{ULISSE} scales better as the query length increases. 
As we observe in Figure~\ref{interUlisse}(d), \textit{ULISSE} always prunes more records as the query length increases, whereas \textit{IND-INT} exhibits an unstable pruning pattern that depends on the query length. 

Overall, \textit{ULISSE} uses a more succinct index that permits to scale better (since there are less records to iterate over). 
This translates to reduced CPU time, as well as disk accesses.

\subsection{$\epsilon$-Range Queries}
In this last part, we test the $\textit{ULISSE}$ search algorithm for the $\epsilon$-Range query task. 
To that extent we adapted Algorithm~\ref{exactSearch}, so that given as input $\epsilon \in \mathbb{R}$, it computes the set of subsequences that have a distance to the query smaller than or equal to $\epsilon$. 
Similarly, we also adapted the \textit{UCR Suite} to support $\epsilon$-Range search.
As additional competitors, we consider \textit{IND-INT} (only for Euclidean distance), and \textit{KV-Match}, which is the state-of-the art index-based solution for exact $\epsilon$-Range queries on non Z-normalized data series.

In this experiment, we used five different real datasets, composed by a single data series of different lengths, as reported in Figure~\ref{epsRangeQuery}(a).
For each of these datasets, we can see that $\textit{ULISSE}$ builds its index $5$ times faster than \textit{KV-Match}. 
This is because \textit{KV-Match} is based on the construction of multiple indexes. 
Specifically, it builds different indexes performing a sliding windows extraction at different lengths.
At query answering time, \textit{KV-Match} performs a recombination of query answers coming from the different indexes. 

For our $\epsilon$-Range queries, we set the $\epsilon$ parameter to twice the \emph{NN} distance of each query. 
In this manner, we simulate an exploratory analysis task.   
We report the average value of query selectivity in Figure~\ref{epsRangeQuery}(b).
We note that in the ECG dataset the selectivity is very high. This is due to the periodic/cyclical nature of this kind of data, which contain repeating heartbeats subsequences that are very similar.
In the other datasets, we have different values of selectivity ranging from $0.5\%$ to $15\%$, when using Euclidean distance.
On the other hand, when the DTW measure is considered, we observe a significant increase of the answer-set cardinality. 

In Figures~\ref{epsRangeQuery}(c) and~(d), we show the average query answering time for Euclidean distance, when varying the query length and the dataset, respectively. 
The results show that \textit{IND-INT} does not represent a competitive alternative.
In fact, only in one case, when the number of candidates is the smallest (i.e., $256$), it has time performance better than the other approaches.
When the number of possible answers increases (smaller query lengths), we observe that the performance of \textit{IND-INT} becomes more than an order of magnitude worse than the rest.

We note that in this case \textit{$ULISSE$} and \textit{KV-match} have no substantial difference in their time performance, with \textit{$ULISSE$} being slightly better.

However, when we consider the DTW distance, \textit{$ULISSE$} becomes up to one order of magnitude faster than \textit{KV-Match}, as shown in Figures~\ref{epsRangeQuery}(e) and~(f).
This difference is pronounced for the two largest datasets: \textit{$ULISSE$} is 3x faster for ECG, and 10x faster for GAP. 
It is important to note that since \textit{KV-Match} needs to recombine the answers from the different index structures, its time performance is affected by this refinement phase of query answering, and is rather sensitive to dataset size and query selectivity.

\begin{figure*}[tb]
	\centering
	\includegraphics[trim={0cm 9cm 2cm 3cm},scale=0.60]{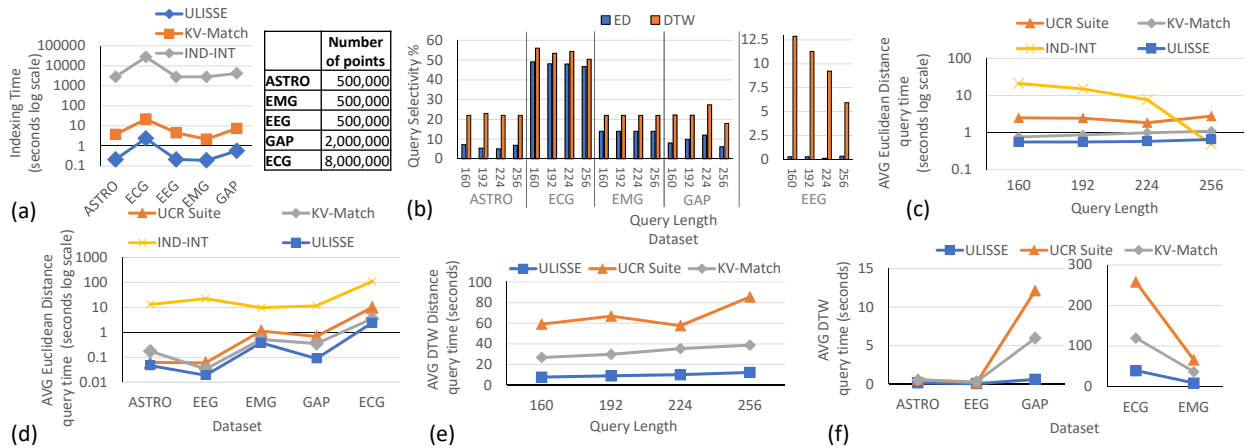}
	\caption{Results of $\epsilon$-range search on non Z-normalized real datasets. \textit{(a)} Indexing time and datasets length. \textit{(b)} Average selectivity of the queries in each dataset. \textit{(c)} Average query answering time for each query length, using Euclidean distance. \textit{(d)} Average query answering time for each dataset, using Euclidean distance. \textit{(d)} Average query answering time for each query length, using DTW. \textit{(e)} Average query answering time for each dataset, using DTW.}
	\label{epsRangeQuery}
\end{figure*}

\section{Conclusions}
\label{sec:conclusions}

Similarity search is one of the fundamental operations for several data series analysis tasks.
Even though much effort has been dedicated to the development of indexing techniques that can speed up similarity search, all existing solutions are limited by the fact that they can only support queries of a fixed length. 

In this work, we proposed \textit{ULISSE}, the first index able to answer similarity search queries of variable-length, over both Z-normalized and non Z-normalized sequences, supporting the Euclidean and DTW distances, for answering exactly, or approximately both \emph{k-NN} and $\epsilon$-range queries.
We experimentally evaluated, our indexing and similarity search algorithms, on synthetic and real datasets, demonstrating the effectiveness and efficiency (in space and time cost) of the proposed solution.




\bibliographystyle{abbrv}       
\bibliography{ulisse}   


\end{document}